\newtheorem{fact}{Fact}
\newcommand{\HFRC}[1][Hefei National Research Center for Physical Sciences at the Microscale and School of Physical Sciences, University of Science and Technology of China, Hefei 230026, China]{\affiliation{#1}}
\newcommand{\SHRC}[1][Shanghai Research Center for Quantum Science and CAS Center for Excellence in Quantum Information and Quantum Physics, University of Science and Technology of China, Shanghai 201315, China]{\affiliation{#1}}
\newcommand{\HFNL}[1][Hefei National Laboratory, University of Science and Technology of China, Hefei 230088, China]{\affiliation{#1}}
\newcommand{\HNKL}[1][Henan Key Laboratory  of  Quantum Information and Cryptography, Zhengzhou, Henan 450000, China]{\affiliation{#1}}
\newcommand{\JIQT}[1][Jinan Institute of Quantum Technology and Hefei National Laboratory Jinan Branch, Jinan 250101, China]{\affiliation{#1}}
\newcommand{\NIM}[1][National Institute of Metrology, Beijing 102200, China]{\affiliation{#1}}
\newcommand{\QCTek}[1][QuantumCTek Co., Ltd., Hefei 230088, China]{\affiliation{#1}}
\newcommand{\SMX}[1][School of Microelectronics, Xidian University, Xi’an, China]{\affiliation{#1}}
\newcommand{\PKUCFCS}[1][Center on Frontiers of Computing Studies,  Peking University, Beijing 100871, China]{\affiliation{#1}}
\newcommand{\PKUSCS}[1][School of Computer Science, Peking University, Beijing 100871, China]{\affiliation{#1}}
\begin{document}

\title{Generation of 95-qubit genuine entanglement and verification of symmetry-protected topological phases}

\author{Tao Jiang}
\thanks{These authors contributed equally to this work.}
\HFRC
\SHRC
\HFNL
\author{Jianbin Cai}
\thanks{These authors contributed equally to this work.}
\HFRC
\SHRC
\HFNL
\author{Junxiang Huang}
\thanks{These authors contributed equally to this work.}
\PKUCFCS
\PKUSCS
\author{Naibin Zhou}
\thanks{These authors contributed equally to this work.}
\HFRC
\SHRC
\HFNL
\author{Yukun Zhang}
\PKUCFCS
\PKUSCS
\author{Jiahao Bei}
\SHRC
\author{Guoqing Cai}
\SHRC
\author{Sirui Cao}
\HFRC
\SHRC
\HFNL
\author{Fusheng Chen}
\HFRC
\SHRC
\HFNL
\author{Jiang Chen}
\SHRC
\author{Kefu Chen}
\HFRC
\SHRC
\HFNL
\author{Xiawei Chen}
\SHRC
\author{Xiqing Chen}
\SHRC
\author{Zhe Chen}
\QCTek
\author{Zhiyuan Chen}
\HFRC
\SHRC
\HFNL
\author{Zihua Chen}
\HFRC
\SHRC
\HFNL
\author{Wenhao Chu}
\QCTek
\author{Hui Deng}
\HFRC
\SHRC
\HFNL
\author{Zhibin Deng}
\SHRC
\author{Pei Ding}
\SHRC
\author{Xun Ding}
\HFNL
\author{Zhuzhengqi Ding}
\SHRC
\author{Shuai Dong}
\SHRC
\author{Bo Fan}
\SHRC
\author{Daojin Fan} 
\HFRC
\SHRC
\HFNL
\author{Yuanhao Fu}
\HFRC
\SHRC
\HFNL
\author{Dongxin Gao} 
\HFRC
\SHRC
\HFNL
\author{Lei Ge}
\SHRC
\author{Jiacheng Gui}
\HFNL
\author{Cheng Guo}
\HFRC
\SHRC
\HFNL
\author{Shaojun Guo}
\HFRC
\SHRC
\HFNL
\author{Xiaoyang Guo}
\SHRC
\author{Lianchen Han}
\HFRC
\SHRC
\HFNL
\author{Tan He}
\HFRC
\SHRC
\HFNL
\author{Linyin Hong}
\QCTek
\author{Yisen Hu}
\HFRC
\SHRC
\HFNL
\author{He-Liang Huang}
\HNKL
\author{Yong-Heng Huo}
\HFRC
\SHRC
\HFNL

\author{Zuokai Jiang}
\SHRC
\author{Honghong Jin}
\SHRC
\author{Yunxiang Leng}
\SHRC
\author{Dayu Li}
\HFRC
\SHRC
\HFNL
\author{Dongdong Li}
\QCTek
\author{Fangyu Li}
\SHRC
\author{Jiaqi Li}
\SHRC
\author{Jinjin Li}
\HFNL
\NIM
\author{Junyan Li}
\SHRC
\author{Junyun Li}
\HFRC
\SHRC
\HFNL
\author{Na Li}
\HFRC
\SHRC
\HFNL
\author{Shaowei Li}
\HFRC
\SHRC
\HFNL
\author{Wei Li}
\SHRC
\author{Yuhuai Li}
\HFRC
\SHRC
\HFNL
\author{Yuan Li}
\HFRC
\SHRC
\HFNL
\author{Futian Liang}
\HFRC
\SHRC
\HFNL
\author{Xuelian Liang}
\JIQT
\author{Nanxing Liao}
\SHRC
\author{Jin Lin}
\HFRC
\SHRC
\HFNL
\author{Weiping Lin}
\HFRC
\SHRC
\HFNL
\author{Dailin Liu}
\HFNL
\author{Hongxiu Liu}
\SHRC
\author{Maliang Liu}
\SMX
\author{Xinyu Liu}
\HFNL
\author{Xuemeng Liu}
\QCTek
\author{Yancheng Liu}
\HFRC
\SHRC
\HFNL
\author{Haoxin Lou}
\SHRC
\author{Yuwei Ma}
\HFRC
\SHRC
\HFNL
\author{Lingxin Meng}
\SHRC
\author{Hao Mou}
\SHRC
\author{Kailiang Nan}
\HFNL
\author{Binghan Nie}
\SHRC
\author{Meijuan Nie}
\SHRC
\author{Jie Ning}
\JIQT
\author{Le Niu}
\SHRC
\author{Wenyi Peng}
\HFNL
\author{Haoran Qian}
\HFRC
\SHRC
\HFNL
\author{Hao Rong}
\HFRC
\SHRC
\HFNL
\author{Tao Rong}
\HFRC
\SHRC
\HFNL
\author{Huiyan Shen}
\QCTek
\author{Qiong Shen}
\SHRC
\author{Hong Su}
\HFRC
\SHRC
\HFNL
\author{Feifan Su}
\HFRC
\SHRC
\HFNL
\author{Chenyin Sun}
\HFRC
\SHRC
\HFNL
\author{Liangchao Sun}
\QCTek
\author{Tianzuo Sun}
\HFRC
\SHRC
\HFNL
\author{Yingxiu Sun}
\QCTek
\author{Yimeng Tan}
\SHRC
\author{Jun Tan}
\HFNL
\author{Longyue Tang}
\SHRC
\author{Wenbing Tu}
\QCTek
\author{Jiafei Wang}
\QCTek
\author{Biao Wang}
\QCTek
\author{Chang Wang}
\QCTek
\author{Chen Wang}
\HFRC
\SHRC
\HFNL
\author{Chu Wang}
\HFRC
\SHRC
\HFNL
\author{Jian Wang}
\HFNL
\author{Liangyuan Wang}
\SHRC
\author{Rui Wang}
\HFRC
\SHRC
\HFNL
\author{Shengtao Wang}
\HFNL
\author{Xiaomin Wang}
\JIQT
\author{Xinzhe Wang}
\HFNL
\author{Xunxun Wang}
\JIQT
\author{Yeru Wang}
\JIQT
\author{Zuolin Wei}
\HFRC
\SHRC
\HFNL
\author{Jiazhou Wei}
\QCTek
\author{Dachao Wu}
\HFRC
\SHRC
\HFNL
\author{Gang Wu}
\HFRC
\SHRC
\HFNL
\author{Jin Wu}
\HFNL
\author{Yulin Wu}
\HFRC
\SHRC
\HFNL
\author{Shiyong Xie}
\HFNL
\author{Lianjie Xin}
\JIQT
\author{Yu Xu}
\HFRC
\SHRC
\HFNL
\author{Chun Xue}
\QCTek
\author{Kai Yan}
\HFRC
\SHRC
\HFNL
\author{Weifeng Yang}
\QCTek
\author{Xinpeng Yang}
\HFRC
\SHRC
\HFNL
\author{Yang Yang}
\SHRC
\author{Yangsen Ye}
\HFRC
\SHRC
\HFNL
\author{Zhenping Ye}
\HFRC
\SHRC
\HFNL
\author{Chong Ying}
\HFRC
\SHRC
\HFNL
\author{Jiale Yu}
\HFRC
\SHRC
\HFNL
\author{Qinjing Yu}
\HFRC
\SHRC
\HFNL
\author{Wenhu Yu}
\SHRC
\author{Xiangdong Zeng}
\HFRC
\author{Chen Zha}
\HFRC
\SHRC
\HFNL
\author{Shaoyu Zhan}
\HFRC
\SHRC
\HFNL
\author{Feifei Zhang}
\SHRC
\author{Haibin Zhang}
\HFNL
\author{Kaili Zhang}
\SHRC
\author{Wen Zhang}
\SHRC
\author{Yiming Zhang}
\HFRC
\SHRC
\HFNL
\author{Yongzhuo Zhang}
\HFNL
\author{Lixiang Zhang}
\QCTek
\author{Guming Zhao}
\HFRC
\SHRC
\HFNL
\author{Peng Zhao}
\HFRC
\SHRC
\HFNL
\author{Xintao Zhao}
\SHRC
\author{Youwei Zhao}
\HFRC
\SHRC
\HFNL
\author{Zhong Zhao}
\QCTek
\author{Luyuan Zheng}
\SHRC
\author{Fei Zhou}
\JIQT
\author{Liang Zhou}
\QCTek
\author{Na Zhou}
\SHRC
\author{Shifeng Zhou}
\HFNL
\author{Shuang Zhou}
\HFNL
\author{Zhengxiao Zhou}
\HFNL
\author{Chengjun Zhu}
\HFNL
\author{Qingling Zhu}
\HFRC
\SHRC
\HFNL
\author{Guihong Zou}
\HFNL
\author{Haonan Zou}
\SHRC
\author{Qiang Zhang}
\HFRC
\SHRC
\HFNL
\JIQT
\author{Chao-Yang Lu}
\HFRC
\SHRC
\HFNL
\author{Cheng-Zhi Peng}
\HFRC
\SHRC
\HFNL
\author{Xiao Yuan}
\PKUCFCS
\PKUSCS
\author{Ming Gong}
\HFRC
\SHRC
\HFNL
\author{Xiaobo Zhu}
\HFRC
\SHRC
\HFNL
\JIQT
\author{Jian-Wei Pan}
\HFRC
\SHRC
\HFNL

\date{\today}

\begin{abstract}
Symmetry-protected topological (SPT) phases~\cite{Azses2020} are fundamental features of cluster states, serving as key resources for measurement-based quantum computation (MBQC)~\cite{Raussendorf2001, Raussendorf2003}. 
Generating large-scale cluster states and verifying their SPT phases are essential steps toward practical MBQC, which however still presents significant experimental challenges. 
In this work, we address these challenges by utilizing advanced superconducting hardware with optimized gate operations, enhanced readout fidelity, and error mitigation techniques~\cite{Bravyi2021}. We successfully generate and verify 95-qubit one-dimensional and 72-qubit two-dimensional genuine entangled cluster states, achieving fidelities of $0.5603 \pm 0.0084$ and $0.5519 \pm 0.0054$, respectively.
Leveraging these high-fidelity cluster states, we investigate SPT phases through quantum teleportation across all 95 qubits and demonstrate input-state-dependent robustness against symmetry-breaking perturbations, highlighting the practicality and intrinsic robustness of MBQC enabled by the SPT order. Our results represent a significant advancement in large-scale entanglement generation and topological phase simulation, laying the foundation for scalable and practical MBQC using superconducting quantum systems.

\end{abstract}

\maketitle

\section{Introduction}

Entanglement is a crucial resource in quantum information processing~\cite{Nielsen2010, Horodecki2009} and plays a foundational role in many-body physics~\cite{Amico2008} and quantum computing algorithms~\cite{Jozsa1997, Graham2022}. 
The generation of large-scale entangled states has become a central focus in quantum computing experiments, serving as both a benchmark for quantum hardware capability and a crucial step toward achieving universal quantum computation~\cite{Ladd2010,Wang2018,Omran2019,Gong2019,Thomas2022}. 
In recent years, substantial efforts have been dedicated to generating multi-partite entangled states, such as the Greenberger–Horne–Zeilinger (GHZ) states~\cite{Song2019, CaoH2023, Pont2024} and cluster states~\cite{Cao2023,Ferreira2022,Paesani2023}. We observe a phenomenon resembling quantum Moore’s law, where the number of entangled qubits doubles approximately every one or two years.
However, this progress is not simply due to an increase in the number of qubits; it is also driven by significant advancements in quantum hardware, including more accurate quantum gates, improved readout precision, and longer decoherence times. Despite these advances, many implementations still face limitations imposed by design constraints and inherent noise, which restrict the size and scalability of the entangled states~\cite{Guo2023, Krastanov2021}. 
A key question, then, is whether we can continue to follow the trajectory of quantum Moore’s law with the current state of quantum technology.

Another critical question is how to utilize the generated entangled states in real experimental applications. 
While the GHZ state has been created and used to study phenomena such as the Nishimori transition~\cite{Chen_2024} and discrete time crystals~\cite{Bao2024}, the broader application of multipartite entanglement in real-world experiments and many-body physics is still in its early stages.
On the other hand, cluster states are a key resource for measurement-based quantum computation (MBQC)~\cite{Raussendorf2001,Raussendorf2003,Briegel2009, Greganti2016} and measurement-based variational quantum eigensolver~\cite{Ferguson2021}; yet, existing proof-of-principle experiments have been limited to very small-scale problems~\cite{Walther2005, Tame2007}, and their scalability to larger problems remains an open question. Therefore, another challenge is to explore how entanglement can be harnessed for large-scale quantum simulation and computation.

\begin{figure*}[htb]
    \centering
    \includegraphics[width=1\linewidth]{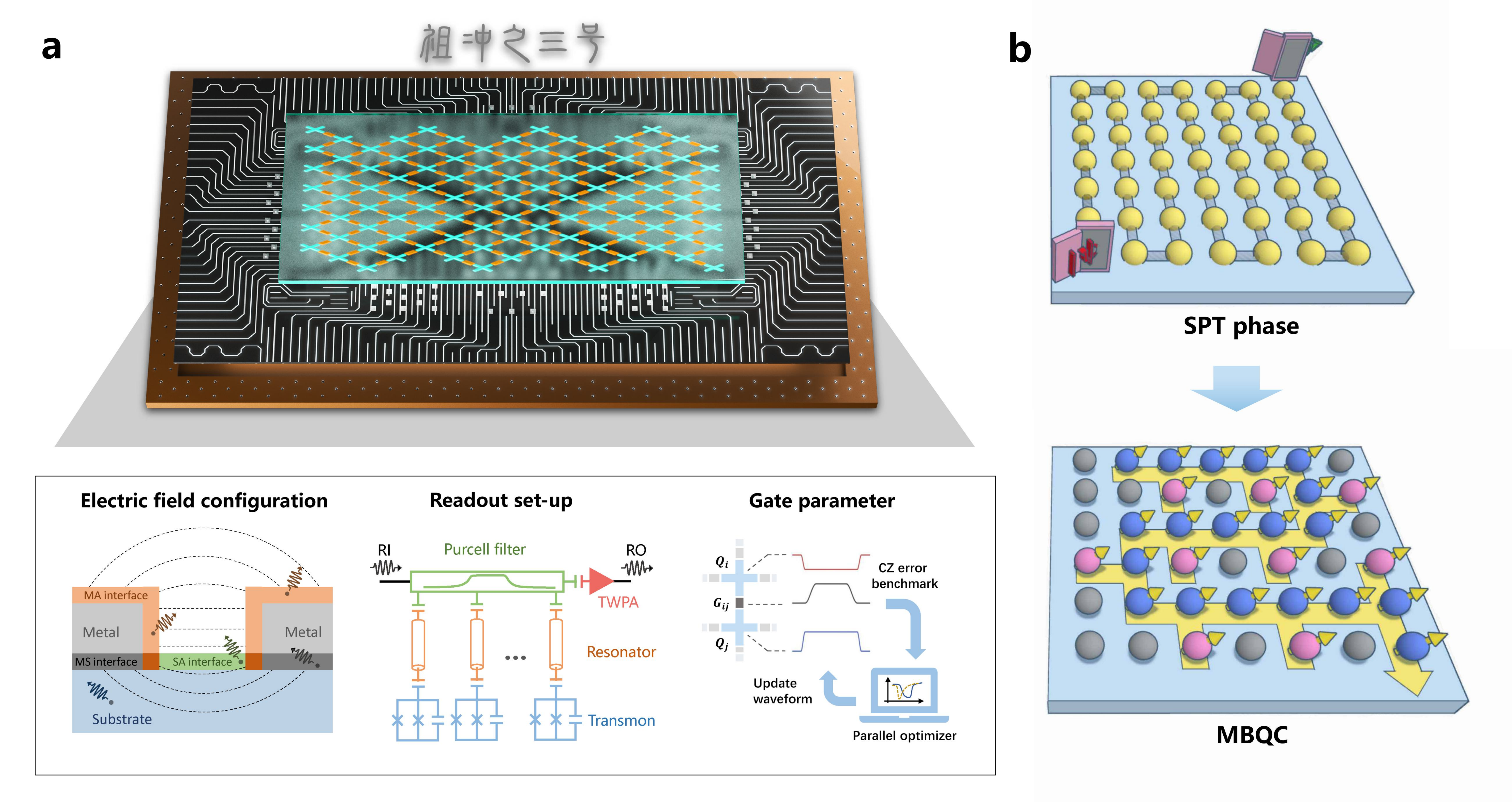}
    \caption{Chip architecture, key technologies, and applications of the \textit{Zuchongzhi 3.1} processor. (a) Schematic diagram of the \textit{Zuchongzhi 3.1} processor architecture and key technological innovations. The processor employs a flip-chip integration design, with the top chip containing 105 transmon qubits and 182 tunable couplers, while the bottom chip integrates readout resonators alongside control lines. To enable large-scale cluster state generation, we optimized the electric field configuration to enhance qubit coherence times, improved the readout design and incorporated traveling-wave parametric amplifiers to achieve higher readout fidelity, and dynamically optimized parallel CZ gate parameters to enhance two-qubit gate fidelity. (b) Potential applications of the generated high-fidelity cluster states. One-dimensional cluster states enable the investigation of symmetry-protected topological phases via quantum teleportation experiments. Two-dimensional cluster states provide a platform for measurement-based quantum computation. 
    }
    \label{fig:perform}
\end{figure*}


Here, we tackle these challenges using \textit{Zuchongzhi} 3.1, a 105-qubit superconducting quantum processor with significantly improved energy relaxation time, readout fidelity, and two-qubit CZ gate fidelity, as shown in Fig.~\ref{fig:perform}a. These enhancements result from systematic electric field configuration optimization~\cite{wenner2011surface}, an improved readout design~\cite{jeffrey2014fast} incorporating traveling-wave parametric amplifiers (TWPA)~\cite{macklin2015near}, and dynamically optimized parallel CZ gate parameters. As a result, we achieve the preparation of genuine entangled cluster states with up to 95 qubits, nearly doubling the previous record.
This advancement in large-scale entanglement generation establishes a robust platform for exploring complex many-body physics and measurement-based quantum computation, as illustrated in Fig.~\ref{fig:perform}b. 
Since the symmetry-protected topological (SPT) nature~\cite{Azses2020, Paszko2024} of cluster states is a fundamental resource for MBQC, we explored the SPT phases of high-fidelity one-dimensional (1D) cluster states. We successfully demonstrated high-fidelity quantum teleportation across a 95-qubit 1D cluster state and investigated the robustness of the teleportation process under both symmetry-preserving and symmetry-breaking local perturbations. These findings provide valuable insights into the feasibility of MBQC using state-of-the-art superconducting quantum hardware.

\begin{figure}[htb]
    \centering
    \includegraphics[width=0.65\linewidth]{ 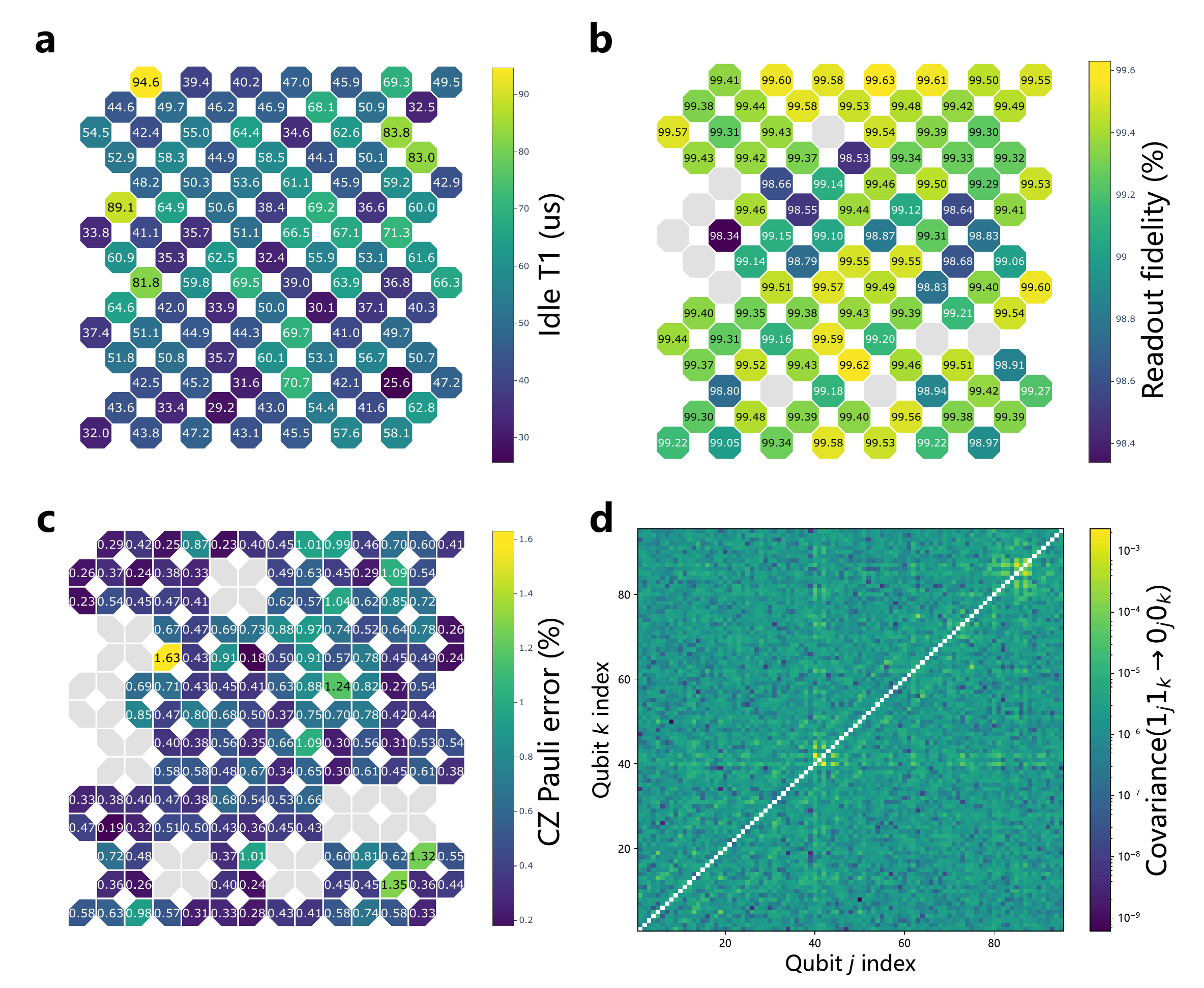}
    \caption{Key performances of the processor. (a) Energy relaxation time $T_1$: the median $T_1$ of all 105 qubits at idle frequencies is 49.7 µs. (b) Readout fidelity: the median readout fidelity of 95 measured qubits using random state preparation schemes~\cite{Cao2023} is 99.35\%. (c) Two-qubit gate fidelity: the median fidelity of 150 calibrated CZ gates is 99.50\%. (d) Covariance matrix of the readout errors $1_j1_k$ to $0_j0_k$ for the 95 measured qubits in the 1D topology. Off-diagonal elements indicate correlated errors, with color intensity on a logarithmic scale reflecting the magnitude of the covariance.}
    \label{fig:chipset_para}
\end{figure}

\begin{figure*}[t]
    \centering
    \includegraphics[width=0.9\linewidth]{ 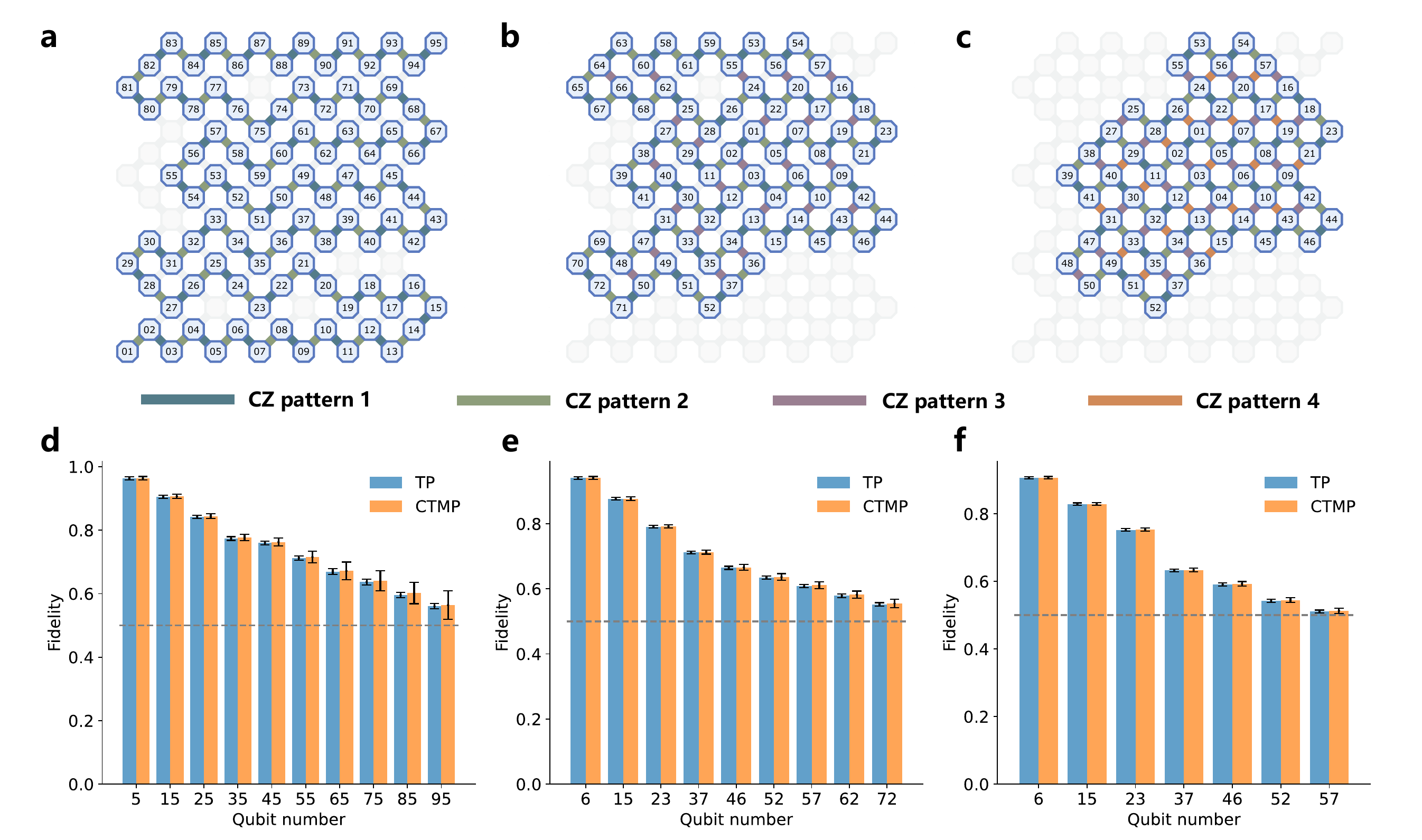}
    \caption{Layout and fidelity of 1D and 2D cluster states. 
    (a) Layout of a 1D cluster state geometry with 95 qubits and two CZ patterns. 
    (b) Layout of a 2D cluster state geometry with 72 qubits and three CZ patterns. 
    (c) Layout of a 2D cluster state geometry with 57 qubits and four CZ patterns. 
    (d, e, f) Cluster state fidelity as a function of qubit number with different readout error mitigation strategy, including (d) 1D, (e) 2D sparse, and (f) 2D full cluster state. Since the correlated noise between qubit pairs is relatively weak, we can use the tensor product method for readout error mitigation, whereas we can also consider the most correlated two-qubit correlations to build the correction matrix within the framework of continuous-time Markov processes. The error bars in (d), (e) and (f) represent the confidence interval of 99.7\%. The grey dashed line corresponds to the fidelity of 50\%. The total measurement time was 26~hours. 
    }
    \label{fig:cluster_fidelity}
\end{figure*}

\section{Cluster state preparation and witness}
\label{sec:cluster_state}


We first introduce the implementation of 1D and 2D cluster states. Consider an $n$-vertex graph $G(V, E)$ with vertices $V$ and edges $E$, a stabilizer operator $g_i$ for each vertex $i$ is $
S_i := X_i \bigotimes_{j \in \mathcal{N}(i)} Z_j$,
where $\mathcal{N}(i)$ denotes the set of neighbors of vertex $i$, i.e., $\mathcal{N}(i) = \{x \mid x \in V, (x, i) \in E\}$, and $X_i$, $Y_i$, $Z_i$ are the Pauli operators acting on the $i$-th qubit.
The cluster state $|\mathrm{CL}\rangle$ is the common eigenstate of all stabilizers: $
S_i |\mathrm{CL}\rangle = |\mathrm{CL}\rangle, \quad \forall i = 1, \ldots, n$. 
Equivalently, the cluster state $|\mathrm{CL}\rangle$ has an explicit form
$|\mathrm{CL}\rangle = \prod_{(i, j) \in E} \mathrm{CZ}_{(i, j)} |+\rangle^{\otimes n}$, 
where $\mathrm{CZ}_{(i, j)}$ represents the controlled-$Z$ gate acting on qubits $i$ and $j$, and $|+\rangle = (|0\rangle + |1\rangle)/\sqrt{2}$. This form illustrates how cluster states can be constructively prepared from the computational basis states using a series of $\mathrm{CZ}$ gates.
For the experimentally prepared cluster state $\hat{\rho}_\mathrm{CL}$, we can measure the fidelity $F = \langle\mathrm{CL}|\hat{\rho}_\mathrm{CL}|\mathrm{CL}\rangle$ and certify genuine multipartite entanglement whenever $F>50\%$~\cite{Guhne2009}.
Measuring the fidelity requires implementing the projector $
|\mathrm{CL}\rangle \langle \mathrm{CL}| = \prod_{i \in V} ({\mathbb{I} + S_i})/{2}$, which can be efficiently obtained by randomly measuring Pauli operators $P^{(m)} = \prod_{i=1}^n P_i^{(m)}$ with each $P_i^{(m)}$ randomly drawn from $\{\mathbb{I}, S_i\}$~\cite{Flammia2011}.

To achieve high-fidelity and large-scale cluster states, we have introduced several key hardware enhancements that significantly improve the performance of the quantum processor. Drawing on multiple studies examining the relationship between qubit layout, surface participation, and dielectric loss, we optimized the electric field configuration within the tunable-coupler architecture. Additionally, we employed tantalum~\cite{place2021new} to further enhance qubit lifetime. These improvements led to a 1.6 times increase in the median energy relaxation time, extending it from 30.8 $\mu$s~\cite{Wu2021} to 49.7 $\mu$s, as shown in Fig.~\ref{fig:chipset_para}a. We also updated the design of the Purcell filter to optimize the coupling $g$ and photon-loss rate $\kappa$ of the readout resonators and developed a traveling wave parametric amplifier to further enhance the signal-to-noise ratio of the readout, improving the readout fidelity from 95.09\%~\cite{Cao2023} to 99.35\%, as shown in Fig.~\ref{fig:chipset_para}b. 
Additionally, we implemented a range of optimization strategies, including frequency arrangement and parallel optimization of gate parameters, which improved the median two-qubit CZ gate fidelity from 99.05\%~\cite{Cao2023} to 99.50\%, with all 150 calibrated gates shown in Fig.~\ref{fig:chipset_para}c. 

The quantum circuit for cluster state preparation begins with a layer of $\sqrt{Y}$ gates to initialize all qubits to the $|+\rangle$ state, followed by layers of parallel CZ gates applied sequentially to entangle the qubits. To address readout errors in the circuit, we utilized multiple error models to characterize and mitigate noise effects~\cite{Bravyi2021}. Specifically, we measured the outcomes of 21,500 randomized initial states on the system involving 95 qubits, with each state subjected to 3,000 single-shot measurements, providing a comprehensive characterization of the system readout errors. The dataset enabled us to estimate two-qubit correlated noise and calculate correlation coefficients for selected qubit pairs, identifying key sources of error correlations. As exemplified in Fig.~\ref{fig:chipset_para}d, for the correlated readout error $1_j1_k$ to $0_j0_k$, which is defined as the covariance of events that $1_j$ and $1_k$ flips, the median correlation is \( 2.09 \times 10^{-6} \), with a maximum of  \( 2.30 \times 10^{-3} \), suggesting that most correlated noise between qubit pairs is relatively weak. This observation supports the approach of treating qubit noise as approximately independent, which allows for the cancellation of both independent and correlated readout errors within the tensor product (TP) and continuous-time Markov process (CTMP) schemes, respectively.

For the 1D cluster state, we used the topology depicted in Fig.~\ref{fig:cluster_fidelity}a. To evaluate the fidelity of the 95-qubit 1D cluster state, we uniformly sampled \(M = 3490\) stabilizers and performed \(K = 3000\) single-shot measurements per stabilizer.
Under independent TP readout error mitigation, the fidelity of the 95-qubit 1D cluster state was certified as \(0.5603 \pm 0.0084\) with a confidence level exceeding 99.7\%. 
After analyzing the fidelity results for the stabilizers (see Supplementary Materials), we observed that the fidelity distribution follows a composite Gaussian-type form with a standard deviation of 0.064. This results in an overall average uncertainty of 0.0032 within the 99.7\% confidence interval. These findings confirm the stability of our system, as the uncertainty -- accounting for both systematic and statistical errors -- remains well below the theoretically estimated error.

To further evaluate the effect of readout correlations, we utilize the CTMP scheme~\cite{Bravyi2021} to address correlated errors between qubits. 
Unlike traditional models that assume independent noise sources, the CTMP framework incorporates noise generators, $G_i$, that correspond to specific error types, including single-qubit bit-flip errors and two-qubit correlated bit-flip errors. 
The evolution of the noise is governed by the matrix exponential $\Lambda = \exp(G)$, where $G=\sum_i r_iG_i$ is the sum of these generators, weighted by error rate coefficients $r_i$. 
For noise inversion, we employ a sampling-based method that avoids the computational complexity of direct matrix inversion. Instead, we sample from a Poisson distribution to compute the weighted sum, simulating the inverse effects of noise interactions. This approach enables efficient noise simulation, prioritizing significant correlated errors, and scales well for larger quantum systems.
After applying error mitigation, we obtain a fidelity of $0.5639 \pm 0.0453$ for the 95-qubit case, with a confidence level exceeding 99.7\%.
A more detailed analysis of correlated readout errors and the associated error mitigation overhead is provided in the Supplementary Materials.
Fig.~\ref{fig:cluster_fidelity}d compares the fidelities obtained from both TP and CTMP methods, from 5 to 95 qubits, illustrating the negligible effect of readout correlations in our system. 



For 2D cluster states, we implemented two distinct topologies: the three-pattern (``sparse'') topology and the four-pattern (``full'') topology, as shown in Fig.~\ref{fig:cluster_fidelity}b and c. The key difference between these configurations is that the three-pattern topology can be derived from the four-pattern topology by removing the fourth layer of CZ gates, which allows for a larger 2D entanglement system while maintaining the same fidelity.
For the 72-qubit 2D three-pattern cluster state, we uniformly sampled \(M = 2393\) stabilizers, with each stabilizer undergoing \(K = 2000\) single-shot measurements. For the 57-qubit 2D four-pattern cluster state, we sampled \(M = 2002\) stabilizers, with the same number of single-shot measurements per stabilizer. 
We estimated the fidelities and associated errors for both topologies across a range of system sizes, from 6 qubits up to the maximum sizes of 72 qubits (three-pattern) and 57 qubits (four-pattern), as shown in Fig.~\ref{fig:cluster_fidelity}e and f. The total measurement times were 16 hours and 6.5 hours, respectively. 
Notably, under independent error mitigation, the fidelities for the 72-qubit 2D three-pattern and 57-qubit 2D four-pattern cluster states are \(0.5519 \pm 0.0054\) and \(0.5104 \pm 0.0045\), respectively, with a confidence level exceeding 99.7\%.
As a contrast, under correlated readout error mitigation, the fidelities for the 72-qubit and 57-qubit cluster states are \(0.5549 \pm 0.0127\) and \(0.5128 \pm 0.0082\), respectively. The comparison of the fidelities under both error mitigation schemes shows the negligible effect of readout correlation on the fidelity results for these systems.

\section{Simulation of Symmetry-Protected Topological Phases}
\label{sec:SPT}

\begin{figure*}[t]
    \centering
    \includegraphics[width=0.9\linewidth]{ 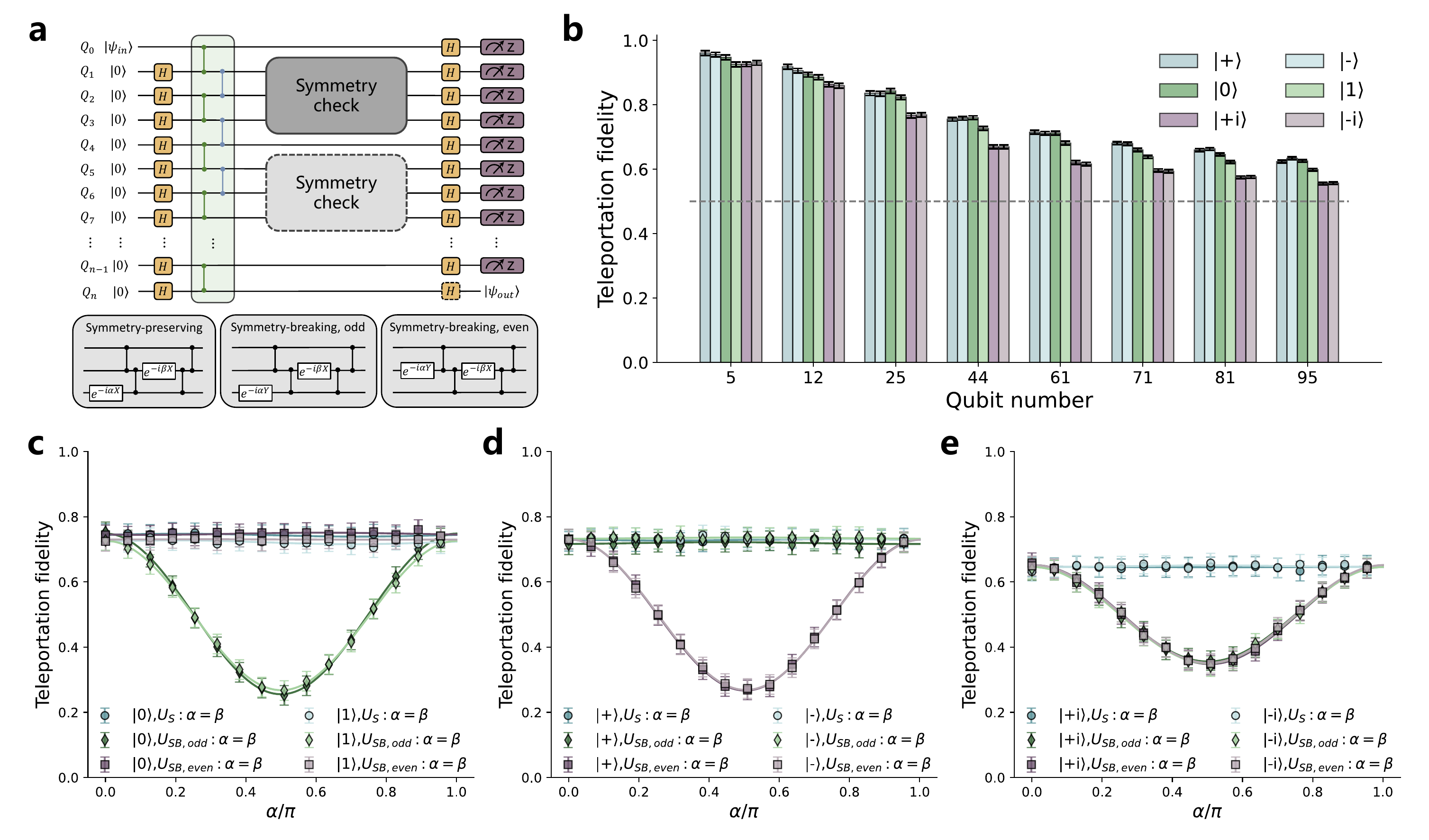}
    \caption{Experimental simulation of SPT phases in quantum teleportation.
    (a) Quantum teleportation circuit for symmetry-protected topological phase experiments, where teleportation is realized through measurement after one-dimensional entanglement. The experiments investigate fidelity oscillations in quantum teleportation under three symmetry perturbations. 
    (b) Teleportation fidelity as a function of total qubit number for various input states ( $|+\rangle$, $|-\rangle$, $|0\rangle$, $|1\rangle$, $|+i\rangle$, $|-i\rangle$ ). The fidelity \( F = |\langle \psi_{\text{in}} | \tilde{\psi}_{\text{out}} \rangle|^2 \), where \( |\tilde{\psi}_{\text{out}}\rangle = X^{\sum_{i \in \text{odd}} m_i} Z^{\sum_{i \in \text{even}} m_i+m_0} \ket{\psi_{\text{out}}} \), with \( Z \) and \( X \) being the Pauli operators, and \( m_i \) denotes the measurement outcomes (0 or 1) for the qubits indexed starting from 1. 
    (c) (d) (e) Teleportation fidelity as a function of the phase parameter \( \alpha \)   for input states \( |0\rangle \) and \( |1\rangle \), \( |+\rangle \) and \( |-\rangle \), \( |+i\rangle \) and \( |-i\rangle \), respectively. Using a cluster state with \( N = 24 \), we explored the effects of different perturbations, including symmetry-preserving (circle), odd-parity symmetry-breaking (diamond), and even-parity symmetry-breaking (square).
    }
    \label{fig:SPT}
\end{figure*}


The intrinsic SPT properties of cluster states are essential for MBQC, making the verification of SPT phases in large-scale cluster states a critical step for their utilization in MBQC~\cite{else2012symmetry, raussendorf2019computationally}. Verifying the preservation of the \(\mathbb{Z}_2 \times \mathbb{Z}_2\) symmetry is key to ensuring that the cluster state can function as a resource for MBQC, as the computational power of SPT phases arises from the symmetries they retain~\cite{stephen2017computational}. With the availability of high-fidelity cluster states, we are now well-positioned to explore and study the SPT phase further~\cite{Azses2020, Paszko2024}.

A 1D cluster state satisfies the \( \mathbb{Z}_2 \times \mathbb{Z}_2 \) symmetry, which is characterized by the odd-parity operator \( P_{\text{odd}} = \prod_{i \in \text{odd}} S_i \) and the even-parity operator \( P_{\text{even}} = \prod_{i \in \text{even}} S_i \) and serves as the key resource for MBQC.
The SPT phase of the 1D cluster states can be certified by quantum teleportation~\cite{else2012symmetry}, a special case of MBQC. Specifically, we teleport an input state $|\psi_{\rm in}\rangle$ using a 1D cluster state by entangling them and measuring the intermediate qubits in the Pauli X-basis. The input state is then transferred to the output qubit, up to a unitary correction, as shown in Fig.~\ref{fig:SPT}a.
The symmetry-protected nature of 1D cluster states implies that quantum teleportation remains robust as long as the underlying symmetry is preserved. This robustness can be leveraged to detect SPT orders and hence benchmark the feasibility of MBQC.

In our experiment, we first conducted a series of teleportation tasks with different input states and 1D cluster states with up to 94 nodes, as shown in Fig.~\ref{fig:SPT}b. 
The teleportation fidelity consistently exceeded 50\% for all cases, certifying the SPT phase of the 1D cluster states. Notably, the fidelities for the \( |+\rangle \), \( |-\rangle \), \( |0\rangle \), and \( |1\rangle \) states consistently outperformed those for the \( |+i\rangle \) and \( |-i\rangle \) states.
This indicates that the teleportation circuits could be affected differently by the noises for different input states.


We further investigated the SPT phases by introducing three symmetry-checking unitaries: $U_{\rm S}(\alpha, \beta) = e^{i \beta Z_1 X_2 Z_3} e^{i \alpha X_3}$, $U_{\rm SB, odd}(\alpha, \beta) = e^{i \beta Z_1 X_2 Z_3} e^{i \alpha Y_3}$, and  $U_{\rm SB, even}(\alpha, \beta) = e^{i \beta Z_1 X_2 Z_3} e^{i \alpha Y_2}$, corresponding to symmetry-preserving, odd-parity symmetry-breaking, and even-parity symmetry-breaking perturbations, respectively.
As shown in Fig.~\ref{fig:SPT}c, for the input states $|0\rangle$ and $|1\rangle$, the fidelity of the teleportation was examined as a function of the phase $\alpha$, revealing the effects of these perturbations. Significant fidelity oscillations were observed under odd-parity symmetry-breaking perturbations, while the fidelity remained robust for the other unitaries. 
For the  $|+\rangle$  and  $|-\rangle$  input states, as shown in Fig.~\ref{fig:SPT}d, the fidelity was influenced only by even-parity symmetry-breaking perturbations $U_{\rm SB, even}$. Interestingly, for the  $|+i\rangle$  and  $|-i\rangle$  states, fidelity oscillations were observed under both odd- and even-parity symmetry-breaking perturbations, as shown in Fig.~\ref{fig:SPT}e. 


This intriguing phenomenon arises from the fact that the success of quantum teleportation for different input stabilizer states depends on the preservation of either odd-parity (e.g., \(\ket{0}\) and \(\ket{1}\)), even-parity (e.g., \(\ket{\pm}\)), or both parity operators (e.g., \(\ket{\pm i}\)). 
Notably, the symmetry-breaking unitary \(U_{\rm SB}\) may act differently depending on which $\mathbb{Z}_2$ symmetry the input state pertains to. 
Specifically, \(U_{\rm SB}\) can break one of the \(\mathbb{Z}_2 \times \mathbb{Z}_2\) symmetries (either odd or even), while leaving the other symmetry intact, thereby leaving input states that respect the unaffected symmetry unchanged. 
The implications are that when noise breaks the \(\mathbb{Z}_2 \times \mathbb{Z}_2\) symmetry, the \(\ket{0}\), \(\ket{1}\), \(\ket{+}\), and \(\ket{-}\) input states are affected by the breaking of either the odd or even symmetry, but not both. 
However, the \(\ket{+i}\) and \(\ket{-i}\) states exhibit greater sensitivity to symmetry breaking, as evidenced by the experimental results shown in Fig.~\ref{fig:SPT}b, which demonstrate lower fidelities compared to the other input states. 
A more detailed explanation is provided in the Supplementary Materials. 
These findings underscore the intricate dependence of the SPT phases on the choice of input states and the application of symmetry-breaking operations, providing valuable insights into the robustness and detection of SPT phases in quantum systems.

\section{Conclusion}
\label{sec:conclusion}
In this work, we have made significant strides in generating and verifying genuine entangled cluster states on superconducting quantum processors with up to 95 qubits, leveraging advanced hardware optimization and readout error suppression. 
Building on these high-fidelity, large-scale entangled states, we investigated symmetry-protected topological phases in 1D linear cluster states and explored their robustness against symmetry perturbations. Our experiments revealed state-dependent sensitivity in teleportation fidelity, providing insights into the preservation of quantum information under symmetry-breaking influences. 
Looking ahead, further improvements in noise mitigation, hardware architecture, and error correction will be essential for scaling these results to even larger quantum systems. The insights gained from this work provide a solid foundation for realizing scalable measurement-based quantum computation and bringing us closer to unlocking the full potential of near-term quantum technologies. 

\begin{acknowledgments}

\textbf{Funding:}
This research was supported by the Innovation Program for Quantum Science and Technology (Grant No.~2021ZD0300200, No.~2023ZD0300200), Anhui Initiative in Quantum Information Technologies, the Special funds from Jinan Science and Technology Bureau and Jinan high tech Zone Management Committee, Shanghai Municipal Science and Technology Major Project (Grant No.~2019SHZDZX01), National Natural Science Foundation of China (Grants No.~92476203) and the New Cornerstone Science Foundation through the XPLORER PRIZE. M. Gong was sponsored by National Natural Science Foundation of China (Grants No.~T2322024), Shanghai Rising-Star Program (Grant No.~23QA1410000) and the Youth Innovation Promotion Association of CAS (Grant No.~2022460). 
 \end{acknowledgments}

\bibliography{ref}
\appendix
\section*{Supplementary Materials for ``Generation of 95-qubit Genuine Entanglement and Verification of Symmetry-Protected Topological Phases''}  
\addcontentsline{toc}{section}{Supplementary Materials}
\renewcommand{\thefigure}{S\arabic{figure}}

\section{Experimental system setup}

Based on the \textit{Zuchongzhi 2.0} quantum processor, we developed the \textit{Zuchongzhi 3.1} processor with larger scale and improved performance by optimizing qubit topology design, materials, fabrication processes, and so on. The processor consists of 105 frequency-tunable transmon qubits and 182 tunable couplers, with the topology shown in Fig.~\ref{fig:105Q}, along with the qubits' maximum frequencies, their idle \(T_1\) times, readout frequencies, resonator linewidths \(\kappa_r\), and coupling strengths between the qubits and resonators. The 105 qubits are divided into 15 rows for readout, each comprising seven qubits sharing a single readout Purcell filter. 

\begin{figure}[htb]
    \centering
    \includegraphics[width=\linewidth]{ 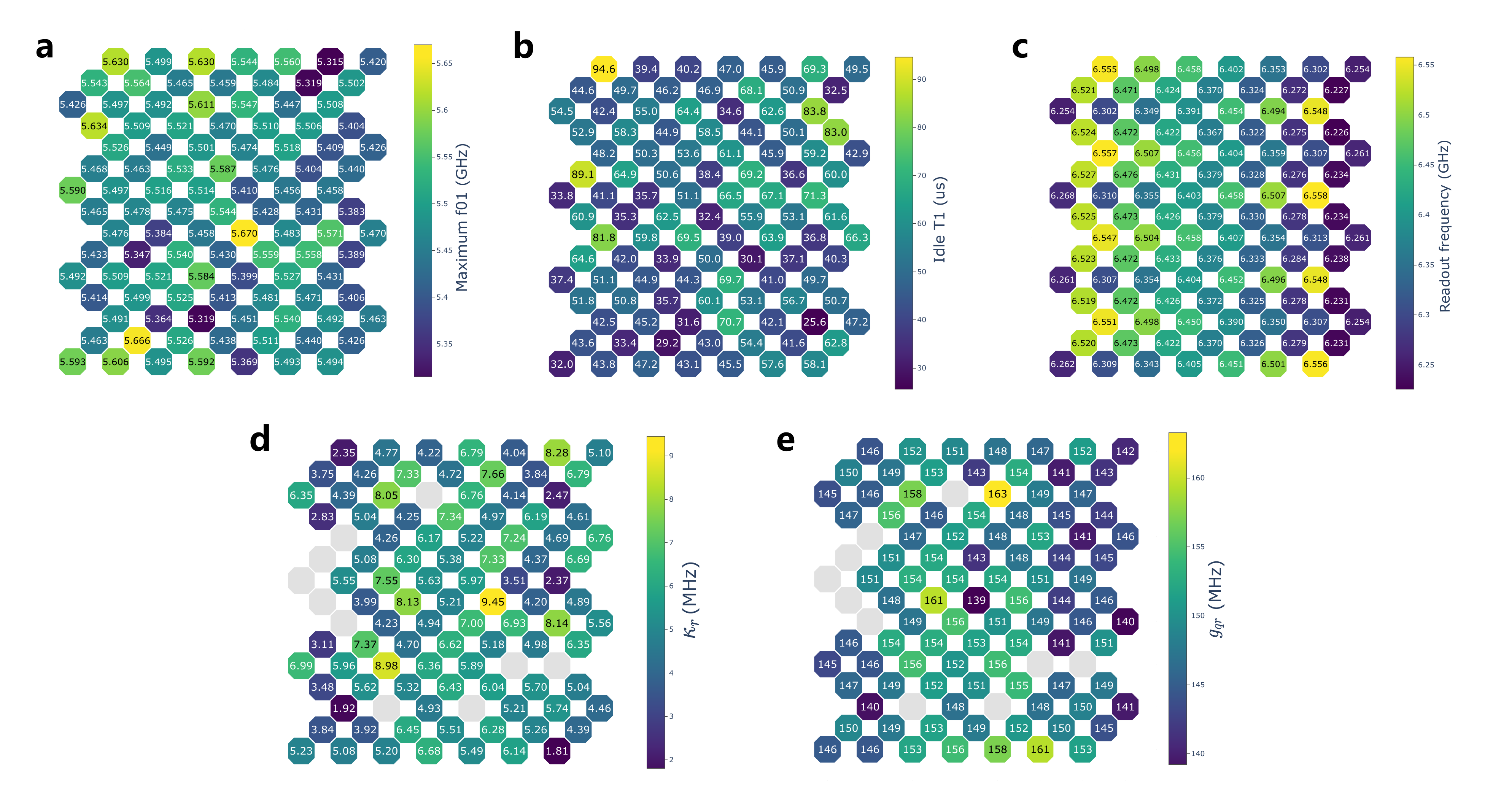}
    \caption{Parameter distribution in the Processor. (a), (b) and (c) for 105 qubits. (a) Maximum frequency distribution. (b) Idle $T_1$ Distribution. (c) Readout frequency distribution. (d) and (e) for 95 qubits. (d) Resonator linewidth. (e) Qubit coupling to readout resonator. }
    \label{fig:105Q}
\end{figure}

We summarize the state and performance parameters of the processor, as detailed in Fig.~\ref{fig:perfrom_compare}. Compared to the \textit{Zuchongzhi 2.0} processor, the enhancements in scale and performance of the \textit{Zuchongzhi 3.1} processor are attributed to the following improvements:

\begin{figure}[htb]
    \centering
    \includegraphics[width=\linewidth]{ 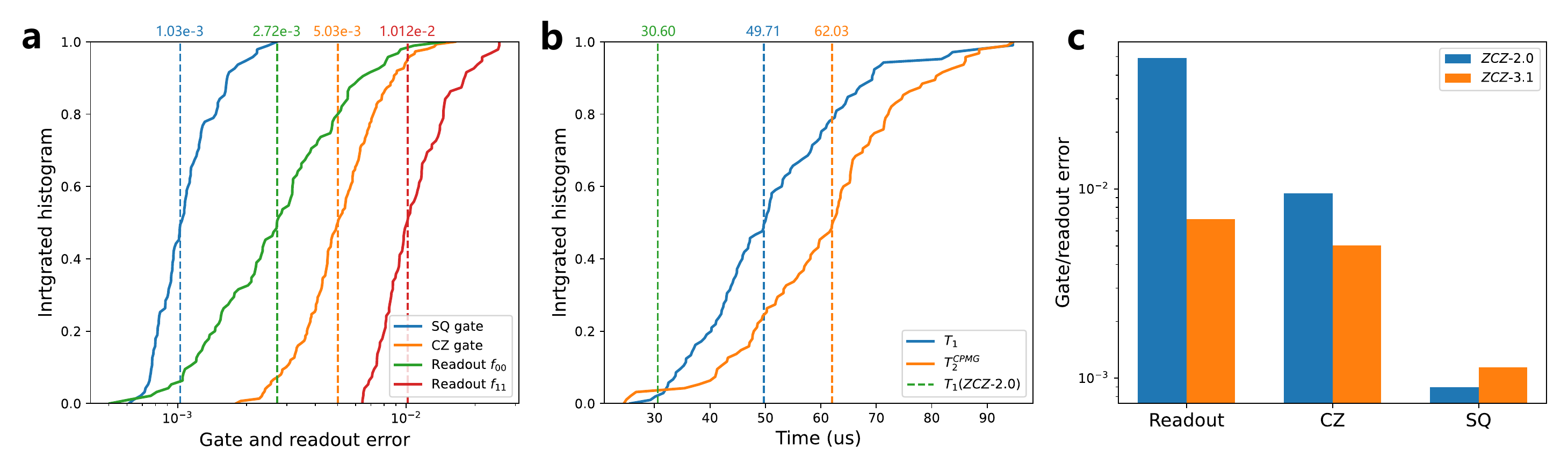}
    \caption{
    Performance metrics for \textit{Zuchongzhi 3.1} processor.
    (a) Integrated histograms of gate and measurement errors, showing error distributions for single-qubit(SQ) gate ($0.103\%$), CZ gate ($0.503\%$), and readout operations ($f_{00}$: $0.272\%$, $f_{11}$: $1.012\%$). SQ gates exhibit the lowest error rates, while CZ gates and measurement errors dominate overall error contributions.
    (b) Relaxation time $T_1$ ($49.71~\mu$s) and Carr-Purcell-Meiboom-Gill(CPMG) based coherence time $T_2^{\text{CPMG}}$ ($62.03~\mu$s) for superconducting qubits, alongside the median $T_1$ of the \textit{Zuchongzhi 2.0} device ($30.60~\mu$s). 
    (c) Comparison of gate and measurement errors between two system configurations (\textit{Zuchongzhi 2.0} and \textit{Zuchongzhi 3.1}). \textit{Zuchongzhi 3.1} (orange) demonstrates reduced errors for readout and CZ gate operations compared to \textit{Zuchongzhi 2.0} (blue), indicating improved system performance. SQ gate errors remain consistently low across both systems.
    }
    \label{fig:perfrom_compare}
\end{figure}

\textbf{Qubit design:}  
In the design of the quantum processor, we first optimized the geometric dimensions of the qubits and couplers, reducing the electric field participation ratio (EPR) of them. The simulation results show that the surface dielectric loss of this processor is 1.6 times lower than that of the \textit{Zuchongzhi 2.0} processor.  
While optimizing the geometric dimensions of the qubits, we also considered the requirements for scalable circuit design. In the current processor, the coupling between qubits consists of two components: direct capacitive coupling and indirect capacitive coupling. By tuning the frequency of the coupler between two qubits, the coupling strength can be quickly and precisely adjusted between approximately $+3\,\mathrm{MHz}$ and $-50\,\mathrm{MHz}$.  
For the readout design, a multi-stage Purcell filter was implemented, which significantly suppresses the Purcell effect and enhances readout multiplexing and performance. Additionally, we optimized the coupling strength between the processor and external components, as well as the sample box design, further improving the coherence and control performance of the processor.  

\textbf{Materials and fabrication process:}
The \textit{Zuchongzhi 3.1} processor employs flip-chip bonding technology. The top chip contains only the qubits and couplers, while all readout and control lines are placed on the bottom chip. To suppress signal crosstalk, all control lines are covered with full-encapsulation crossovers. The readout transmission lines and resonators use isolated crossovers to suppress slotline modes.
For both top and bottom chips, a high quality 200\,nm $\alpha$ -tantalum thin film was fabricated on a sapphire substrate using magnetron sputtering. Before deposition, the sapphire substrate underwent piranha solution cleaning, followed by high-temperature annealing and deposition.  
On the high-quality $\alpha$-tantalum thin film, the coplanar waveguide layer was fabricated using reactive ion etching (RIE). For the bottom chip, the crossover structure was prepared using a pattern transfer and deposition method to sequentially fabricate the silicon dioxide dielectric layer and aluminum bridges. For the top chip, the coupler and qubit Josephson junctions were fabricated using a double-angle evaporation technique.  
After indium bump deposition was completed on the top and bottom chips, the chips were diced, flip-chip bonded, and packaged.  

\begin{figure}[htb]
    \centering
    \includegraphics[width=0.8\linewidth]{ 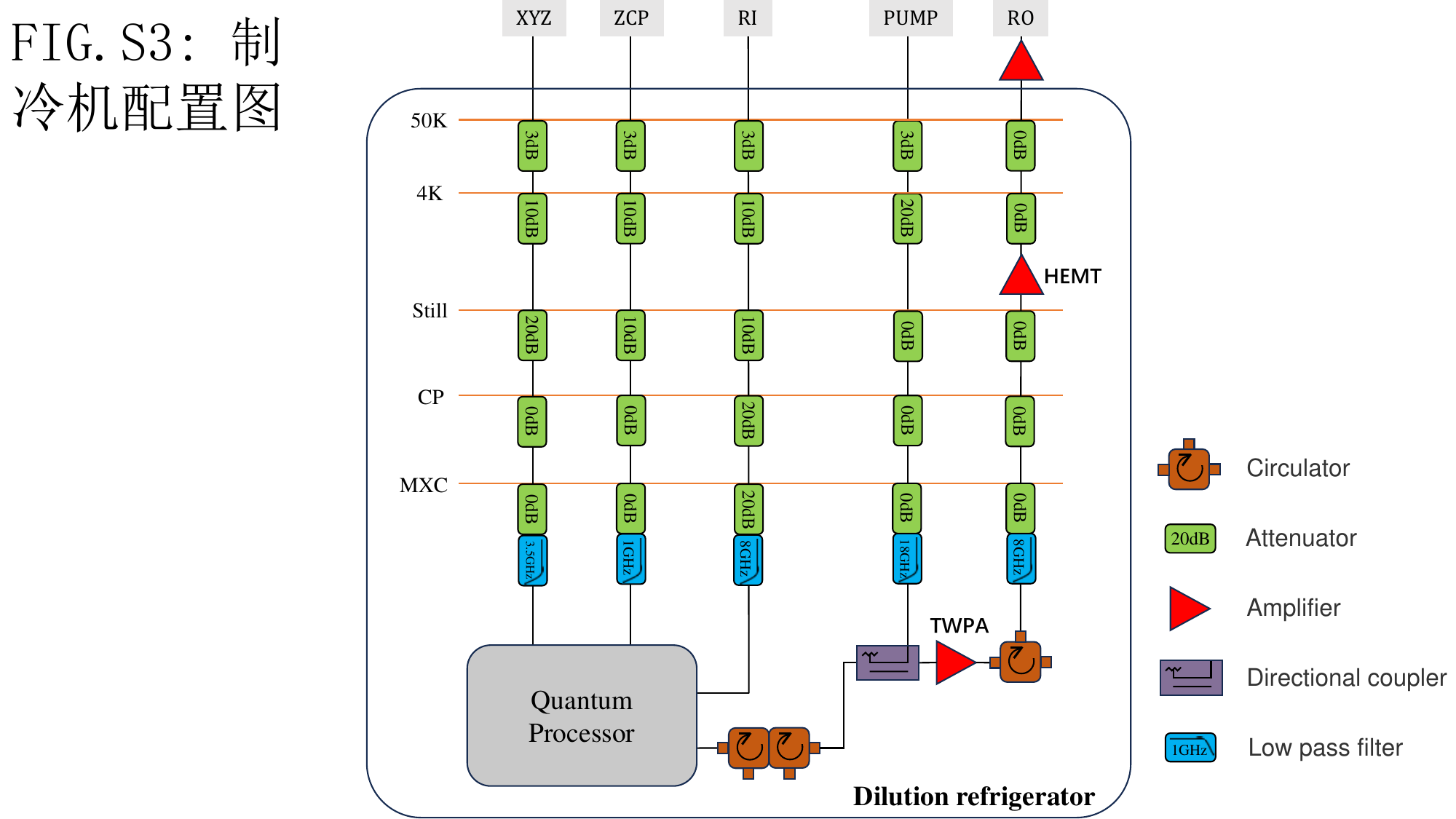}
    \caption{
        The schematic diagram of control electronics and wiring in refrigerator. Each qubit has corresponding XY and Z control lines, which are combined at room temperature stage using a bias tee before entering the dilution refrigerator. Similarly, the control lines for couplers are independently routed into the refrigerator due to different filtering requirements. In the dilution refrigerator, attenuators and filters are installed at various stages to reduce noise. Traveling wave parametric amplifiers (TWPA) , high electron mobility transistors (HEMT) and room-temperature amplifiers are used to amplify the readout signals. At room temperature, the integrated room-temperature electronics mentioned later provides XY/Z control signals as well as RI, RO, and Pump signals. Before being fed into the TWPA, the pump signal is combined with the chip's readout signal through a directional coupler. The readout signals amplified by the room-temperature amplifiers are digitized and demodulated by ADC modules. 
    }
    \label{fig:refrigerator}
\end{figure}

\textbf{Transmission system:}  
After chip fabrication, the processor is connected and secured to a PCB in a gold-plated copper sample box by wire bonding. The fully packaged processor is then mounted on the MXC stage of a dilution refrigerator and connected to room-temperature electronics via microwave transmission lines. The experimental wiring setup for qubit and coupler controls and frequency-multiplexed readouts is shown in  Fig.~\ref{fig:refrigerator}. 
To suppress thermal noise from room-temperature electronics, as well as signal attenuation and shielding, appropriate attenuators and low-pass filters are installed at different temperature stages of the dilution refrigerator.  
Compared to the signal system of the \textit{Zuchongzhi 2.0} processor, we optimized the integration and thermal load of the signal transmission lines in this system. Additionally, in the readout chain, we replaced the Josephson parametric amplifier(JPA) with an in-house developed traveling wave parametric amplifier(TWPA) as the first-stage amplifier for readout signals, further enhancing the signal-to-noise ratio and multiplexing capability of the readout system.  

\begin{figure}[htb]
    \centering
    \includegraphics[width=0.6\linewidth]{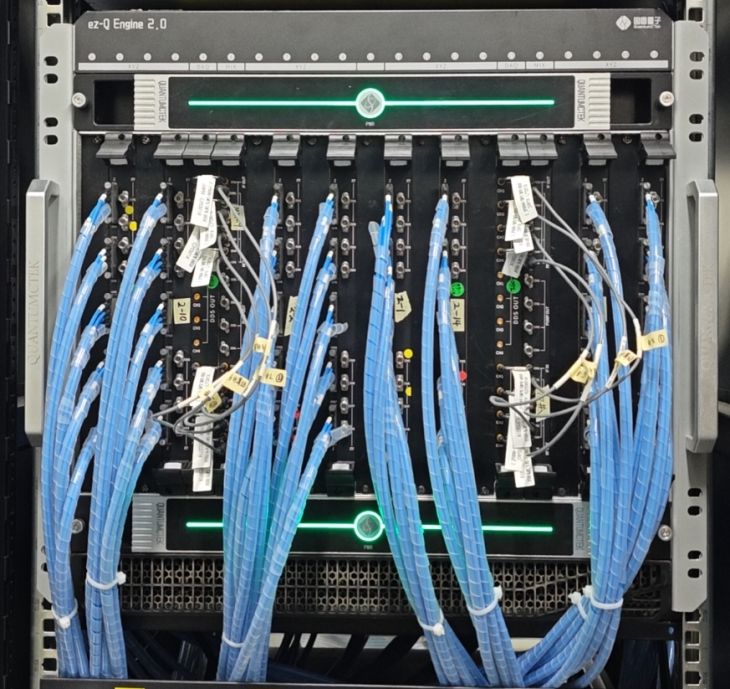}
    \caption{
        The room-temperature electronics devices used in our experiment. 
    }
    \label{fig:cabinet}
\end{figure}

\textbf{Room-temperature electronics System:}  
We designed and developed a high-integration room-temperature electronics system based on digital mixing and direct RF sampling, achieving an integration density of 10 qubits/Unit. The system uses a PXIE chassis as its foundational architecture and includes control modules, readout modules, communication modules, and support for multi-chassis synchronization and feedback cascading.  
As shown in the Fig.~\ref{fig:cabinet}, the single chassis accommodates 16 control cards, 4 readout cards, 1 communication card, 1 system synchronization card, and 1 clock trigger synchronization card in its front and rear slots. The control module generates both XY signals and Z signals. Z signals are produced using a conventional 1\,GSPS 4-channel DAC chip, while the XY signals are generated directly by an RF DAC chip. The 16 control cards offer 128 XY channels and 320 Z channels. Considering quantum chips with couplers, the control cards in a single chassis can manipulate at least 107 qubits. 
Besides, we have integrated digital quadrature modulator and NCO module in the RF DAC, so that issues such as signal leakage caused by analog IQ mixers are effectively avoided. Additionally, the high data rate of the high-sampling-rate chips utilizes the JESD 204B/204C standard, requiring only a minimal number of high-speed GT interfaces on the FPGA, which facilitates high integration.  
The readout module includes readout excitations, readout acquisitions, and readout pumps. Each of the 4 readout cards provides 8 RO, 8 RI, and 8 PUMP signals. The readout acquisition employs a single-sideband mixing scheme to reduce the usage of ADC/DAC channels, achieving a bandwidth of 1.8\,GHz, which enables real-time demodulation of 16 qubit states.  
The readout excitation uses the same RF DAC scheme as the control XY signals, generating 1.8\,GHz bandwidth signals that are upconverted through mixing to control the readout excitation of the qubits. The readout module also integrates PUMP signals that support 7–9\,GHz and 12–14\,GHz frequencies to meet the driving requirements of both TWPA and JPA.  

In addition, we made corresponding improvements in noise control and electromagnetic shielding. Through the optimizations and improvements mentioned above, the \textit{Zuchongzhi 3.1} processor achieved the integration of 105 qubits and 182 couplers, totaling 287 transmons. This processor also demonstrated significant improvements in two-qubit gate fidelity and measurement performance.  

Considering factors such as qubit measurement performance, coherence, and gate fidelity, we selected up to 95 qubits from the 105 operational qubits for our experiments based on the actual experimental requirements.  

\section{Processor calibration}

The processor calibration process in superconducting quantum computing typically involves frequency arrangement strategy, single-qubit gate calibration, two-qubit gate calibration, and readout calibration. These steps are crucial for ensuring qubits can reliably perform quantum gate operations with precise and accurate readout. To enable large-scale cluster state preparation, we have developed a more efficient processor calibration process. 

\subsection{Frequency arrangement strategy}

As the scale of superconducting quantum processors rapidly expands, determining how to effectively allocate the frequencies of qubits, couplers, and readout has become one of the key challenges in quantum processor measurement and control. The frequency arrangement strategy must take into account various factors, including direct and parasitic couplings, signal crosstalk, noise propagation, and so on. 

\begin{figure}[htb]
    \centering
    \includegraphics[width=0.9\linewidth]{ 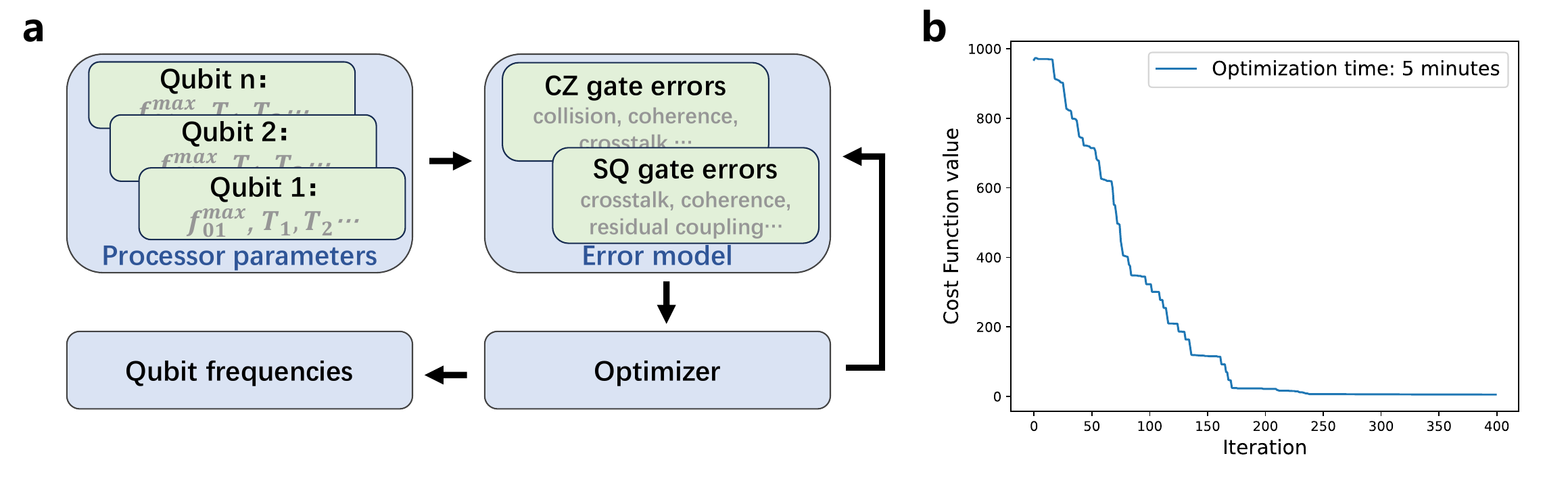}
    \caption{
     Frequency arrangement strategy for reducing errors in parallel gate operations. (a) Flowchart illustrating Frequency arrangement strategy. (b) Error rate convergence curve of the frequency arrangement optimizer, demonstrating convergence within five minutes.
    }
    \label{fig:Frequency_Arrangement}
\end{figure}

Based on the error model of quantum gates, we developed a feasible frequency arrangement strategy to reduce errors during parallel gate operations, shown in Fig.~\ref{fig:Frequency_Arrangement}a. In the quantum gate error model, the error model was instantiated using processor parameters and various error model functions. During the processor's initial calibration, we obtained state data from calibration experiments, including the qubit frequency-flux bias spectrum, qubit coherence performance and its frequency dependence, residual coupling parameters, distortion response parameters, XY signal crosstalk parameters, and other relevant data.

The error model functions were divided into two categories based on single-qubit gate Pauli errors and two-qubit gate Pauli errors. The single-qubit gate error model accounted for factors such as residual coupling during gate operation, qubit coherence performance, and XY signal crosstalk. The two-qubit gate error model function, specifically addressing CZ gate errors in this experiment, primarily considered frequency collisions along the qubit frequency trajectories during gate operation, coherence performance, and its frequency dependence, as well as crosstalk from signal distortion in the CZ gate. 

Additionally, to comply with the requirements of our room-temperature electronics and ensure signal phase synchronization, we introduced supplementary frequency allocation rules to constrain the frequency optimization range.

After instantiating the error model, we define the optimizer's objective function: 
\begin{equation}
Cost Function=Opt(Freqs, Patterns, Model)
\end{equation}
where $Freqs$ is a one-dimensional frequency list containing the qubit idle frequencies and the CZ gate interaction frequencies, $Patterns$ is a set of coupler groups, each used for the parallel execution of CZ gates in the quantum circuit, and $Model$ is the instantiation based on the processor parameters and error model functions. The optimizer employed a commercial, closed-source tool based on a tree search metaheuristic for discrete space optimization. 

To accelerate convergence and reduce the optimization space, we imposed tight constraints on the frequency arrangement, allowing each optimization cycle to converge within five minutes and yield a solution, shown in Fig.~\ref{fig:Frequency_Arrangement}b. However, the objective function for frequency arrangement is highly non-convex and often exhibits multiple local optima. After performing several optimization iterations, we selected the arrangement with the lowest error rate among the identified local optima as the final result. 

\subsection{Quantum gate optimization}

In this experiment, we updated the quantum gate optimization strategy to achieve high-fidelity single-qubit and CZ gate operations at larger scales. Addressing signal crosstalk is essential for ensuring high-fidelity parallel operations. To mitigate this, we effectively reduced control signal crosstalk on the processor using a full-encapsulation crossover process. However, XY signal crosstalk is observed for certain qubits, particularly those with multiple control lines crossing beneath them. 

\begin{figure}[htb]
    \centering
    \includegraphics[width=0.7\linewidth]{ 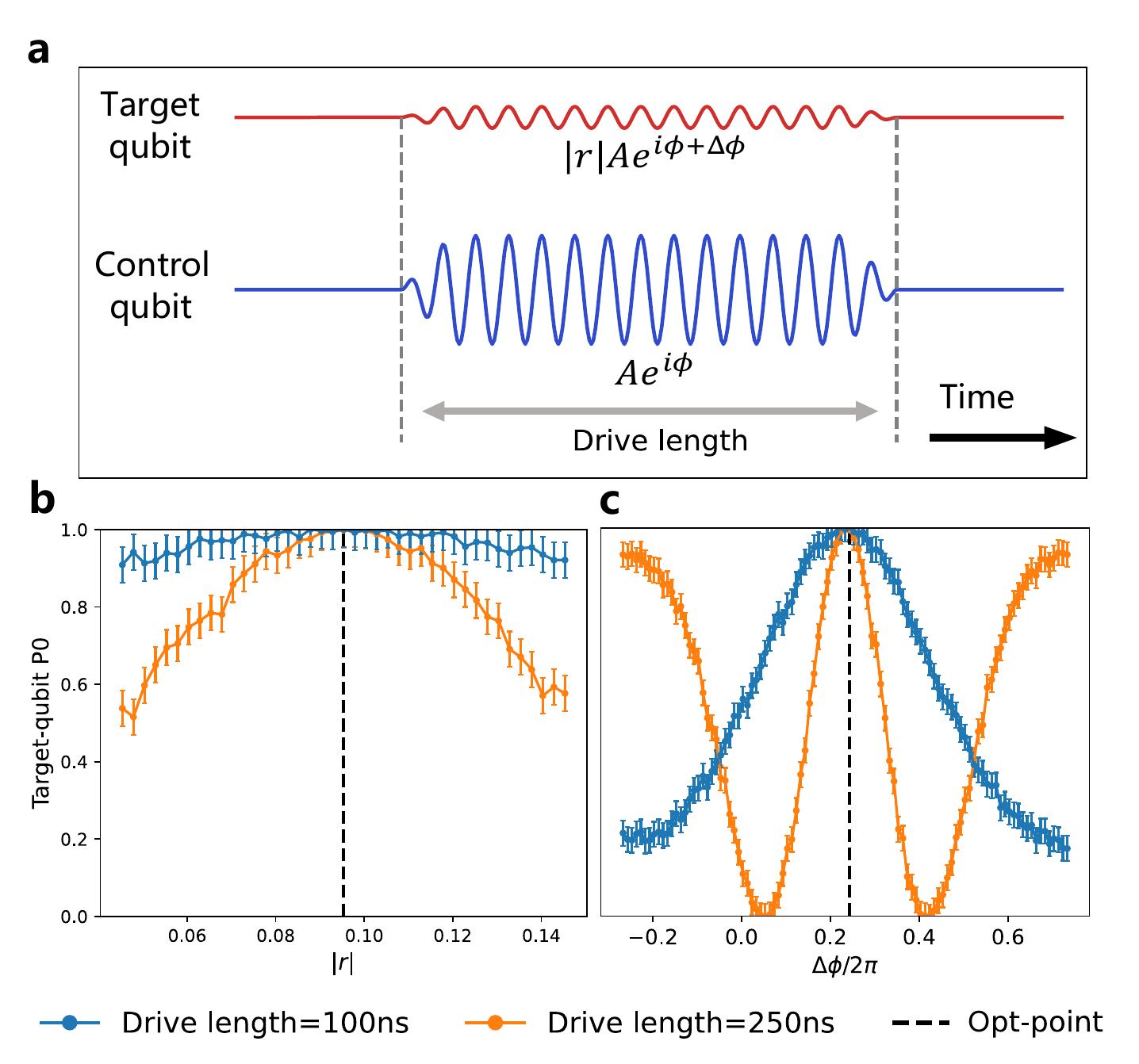}
    \caption{
         Illustration of the active microwave crosstalk correction technology and the XY crosstalk calibration results. 
         (a) Active microwave crosstalk correction applied to the affected qubits. 
         (b) and (c) Calibration results showing the impact of microwave amplitude and phase adjustments on the target qubit. The blue line represents a drive length of $100 ns$, while the orange line corresponds to a drive length of $250 ns$. 
    }
    \label{fig:Single_Qubit_Gate_Cal}
\end{figure}

To improve the fidelity of parallel single-qubit gates, we implemented active microwave crosstalk correction technology, as illustrated in the Fig.~\ref{fig:Single_Qubit_Gate_Cal}a. Specifically, we apply a microwave signal with the same frequency and duration as the crosstalk signal to the affected qubits to mitigate its impact. The results from our XY crosstalk calibration experiment, also shown in the Fig.~\ref{fig:Single_Qubit_Gate_Cal}b and c, involve applying a drive signal of specific strength to the control qubit while systematically scanning the microwave amplitude and phase applied to the target qubit, enabling active compensation of the crosstalk waveform.

\begin{figure}[htb]
    \centering
    \includegraphics[width=1.0\linewidth]{ 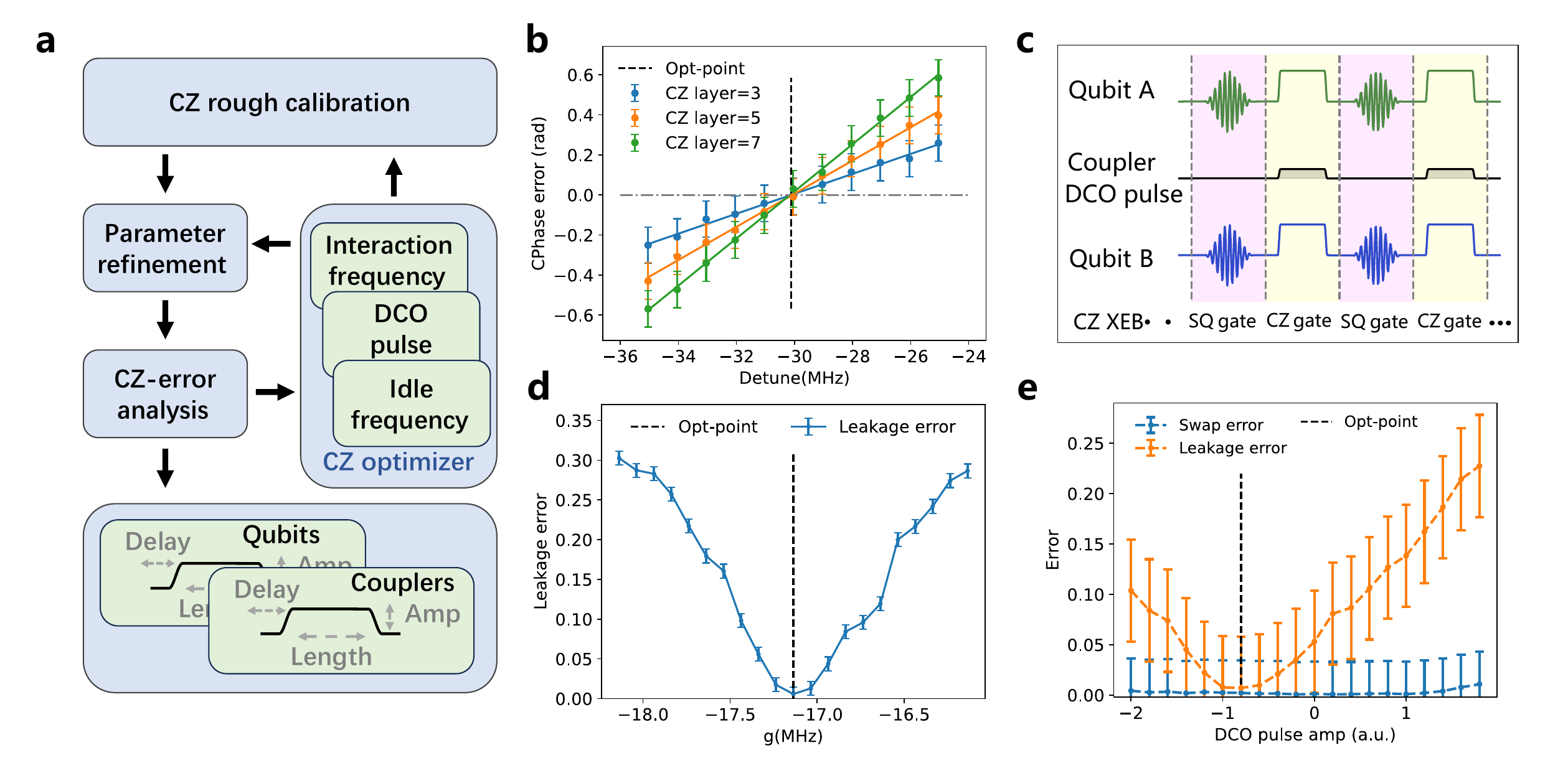}
    \caption{
    CZ gate optimization. 
    (a) Flowchart illustrating the CZ gate optimization strategy. 
    (b) Dependence of c-phase error on the detuning parameter of the CZ gate, as obtained from the scan. 
    (c) Schematic of the DCO technique, demonstrating the application of DCO pulses to couplers to mitigate re-coupling. 
    (d) Dependence of leakage error on the coupling strength of the CZ gate coupler, as shown by the scan results. 
    (e) Variation of swap and leakage errors during the DCO optimization, showing the change in error rates as a function of the coupling off point scan.
}
    \label{fig:CZ_Optimize}
\end{figure}

Prior to calibration, we had already determined a low-error-rate frequency arrangement based on the strategy outlined earlier, which included both idle and CZ interaction frequencies. Using this arrangement, we initially calibrated the CZ gate parameters and then performed distortion response calibration for the qubits and couplers during the CZ gate operation. To further ensure high-fidelity parallel CZ gate performance, we implemented an efficient calibration optimization procedure for the CZ parameters, as illustrated in the Fig.~\ref{fig:CZ_Optimize}a, with the detailed process described as follows:

\begin{enumerate}
    \item \textbf{Parameter refinement}: CZ gate errors were amplified using multi-layer CZ gate circuits, enabling precise optimization of qubit detuning frequencies and coupler coupling strengths. The optimization targeted cphase errors and leakage errors during CZ gate operations, shown in Fig.~\ref{fig:CZ_Optimize}b and c.
    \item \textbf{Interaction point optimization}: Signal crosstalk and frequency collisions in large-scale parallel CZ gate operations can significantly degrade the fidelity of certain gates. In this experiment, we not only targeted CZ SPB errors but also considered additional factors such as swap and leakage errors, which result from transitions of qubit states involving the |1⟩ and |2⟩ states of neighboring qubits. Two-qubit XEB~\cite{arute2019quantum} circuits were employed to optimize the CZ gate interaction frequencies, with a focus on minimizing these errors to improve overall gate fidelity.
    \item \textbf{Local frequency adjustment}: The variation in TLS locations on the processor can introduce significant decoherence errors in the quantum gates of certain qubits. To mitigate this, we optimized the frequencies of the affected qubits and CZ gate interaction points. These adjustments, guided by the frequency arrangement model, aimed to reduce local coherence errors. 
    \item \textbf{Dynamic Coupling Off technology}: During CZ gate operations, qubit detuning can cause unintended re-coupling with neighboring qubits. To mitigate this, we introduced Dynamic Coupling Off(DCO) technology, applying DCO pulses of the same duration as the CZ gate waveform to couplers outside the pattern around the target qubit. The pulse amplitude was optimized to minimize swap and leakage errors involving transitions with the |1⟩ and |2⟩ states of neighboring qubits, shown in Fig.~\ref{fig:CZ_Optimize}d and e, and the optimal coupling off point was selected based on these error metrics.
\end{enumerate}

By employing the above optimization techniques, we achieved parallel CZ gates with a fidelity up to 99.50\% and single-qubit gates with a fidelity exceeding 99.89\% in this experiment. 

\subsection{Readout calibration}

To achieve the preparation and witnessing of large-scale entangled states, it is crucial to simultaneously enhance parallel readout fidelities and minimize correlated measurement errors. In superconducting quantum systems, readout operations are typically the most error-prone. As the processor scale increases, the issue of readout correlation errors caused by qubit frequency collisions, leading to state transitions and leakage, becomes more prominent. This effect becomes especially significant as readout fidelity improves, further amplifying the impact of correlation errors.

In this experiment, we integrated the model of readout correlation errors with experimental data to develop an efficient strategy for optimizing parallel readout. This approach combines theoretical insights and experimental findings to enhance parallel readout fidelity while minimizing correlation errors. The optimization process is as follows:

\begin{figure}[htb]
    \centering
    \includegraphics[width=0.85\linewidth]{ 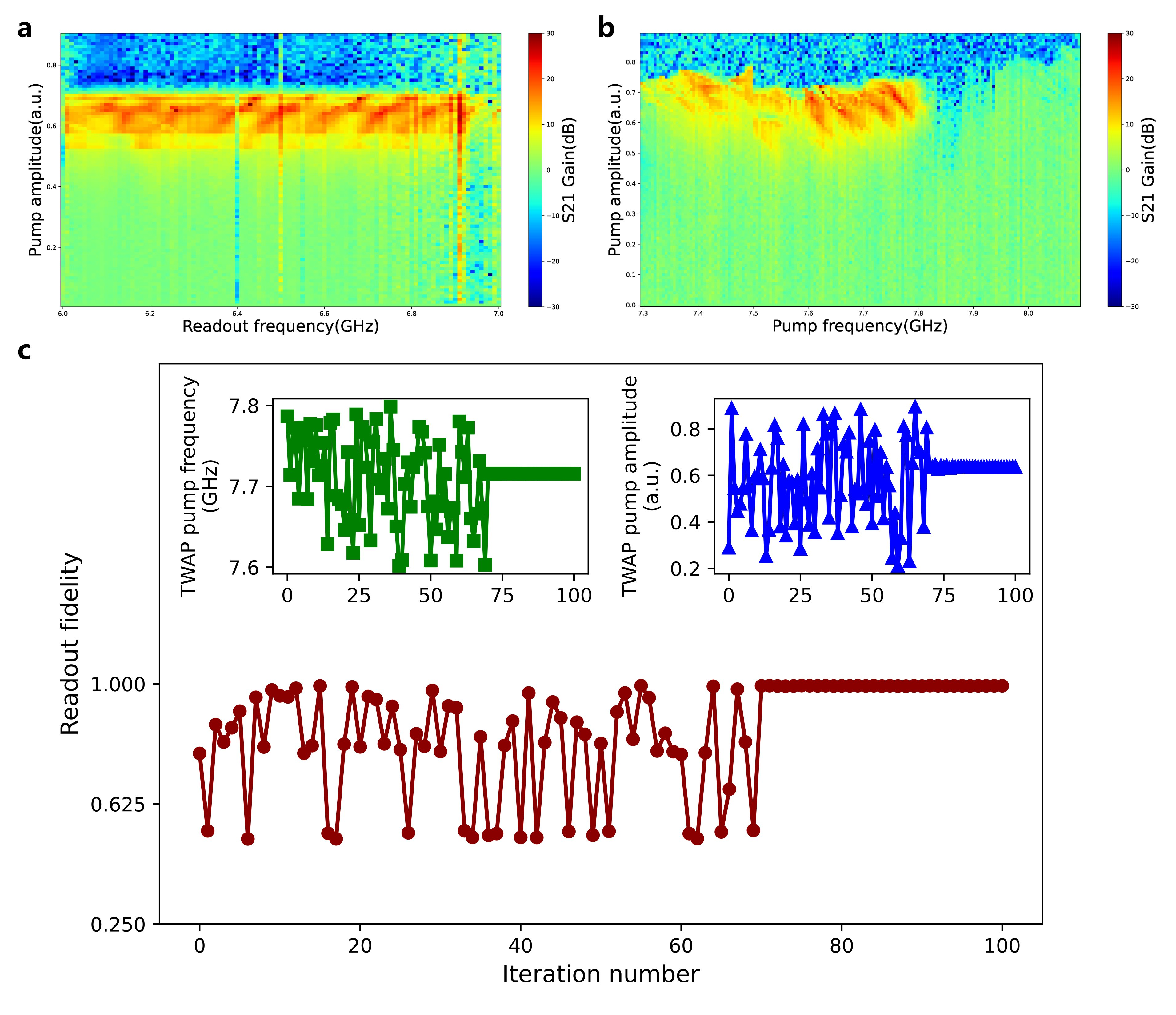}
    \caption{
    Performance and optimization process of the traveling wave parametric amplifier (TWPA). (a) The TWPA gain spectrum with a fixed pump frequency 7.7 GHz, where the x-axis represents the readout frequency and the y-axis represents the pump amplitude. (b) The TWPA gain spectrum with a fixed readout frequency 6.55GHz, where the x-axis represents the pump frequency and the y-axis represents the pump amplitude. (c) Experimental data showing the optimization of the TWPA pump frequency and amplitude using the NM algorithm, with readout fidelity as the optimization target. The x-axis represents the number of iterations. 
    }
    \label{fig:TWPA_opt}
\end{figure}

\begin{figure}[htb]
    \centering
    \includegraphics[width=0.9\linewidth]{ 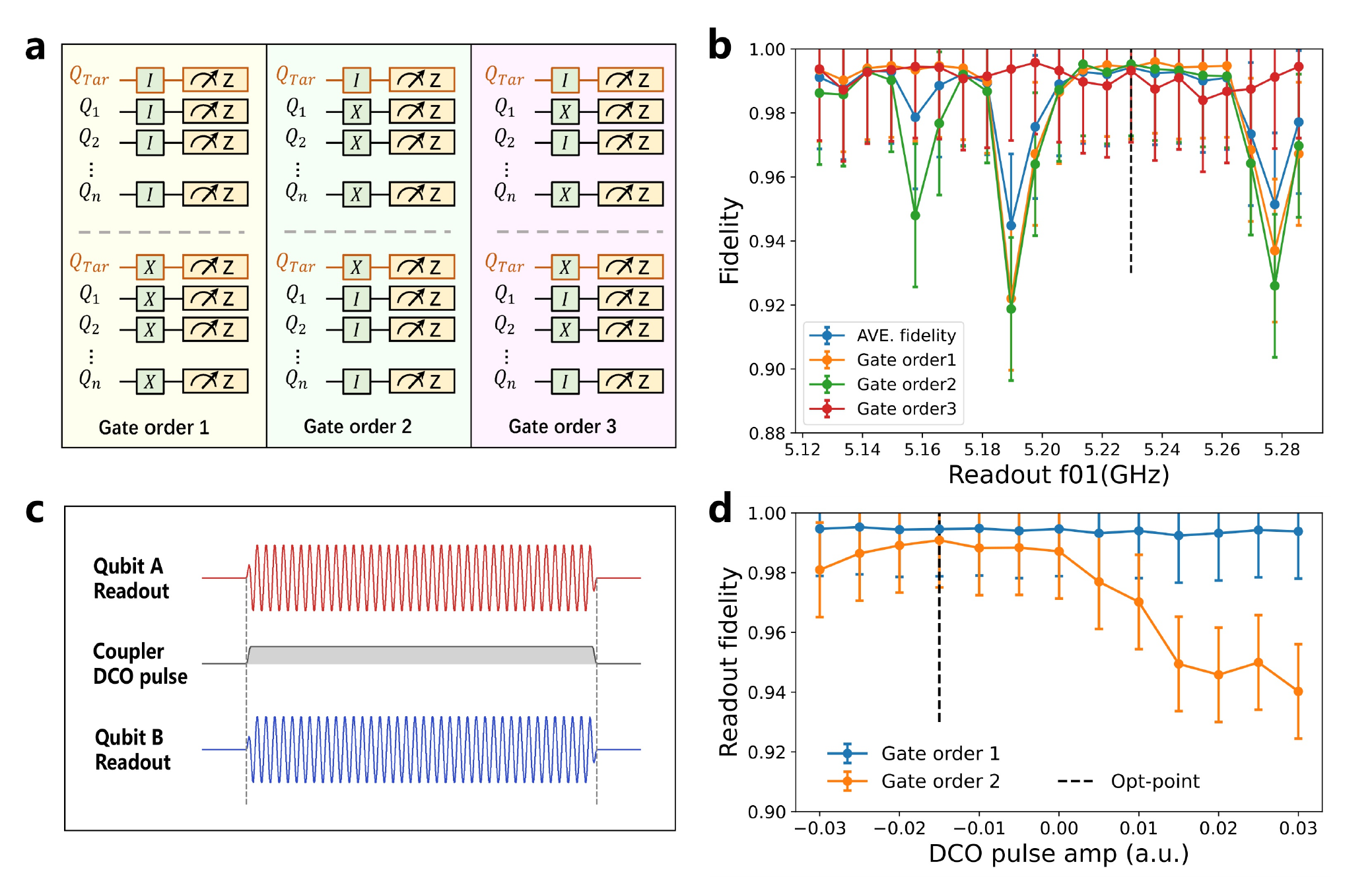}
    \caption{Readout optimization. 
    (a) Three representative circuits used for quick detection of readout correlation errors. 
    (b) Experimental data showing the optimization of correlated readout by scanning the qubit frequency during readout. 
    (c) Schematic of the DCO technique applied during readout.  
    (d) Experimental data showing the results of the DCO parameter scan, demonstrating the reduction in readout correlation errors.
    }
    \label{fig:Readout_Optimize}
\end{figure}

\begin{enumerate}
    \item \textbf{Initial readout parameter calibration}: Determine the initial readout parameters, including frequency, amplitude, and length, based on dispersive shift and other readout calibration experiments.
    \item \textbf{TWPA parameter optimization}: To determine the optimal operating point of the TWPA, the pump power and pump frequency are scanned to obtain the TWPA gain spectrum, as shown in Fig.~\ref{fig:TWPA_opt}a and b. Based on this spectrum, an optimal reference point is selected as the starting point for the optimization process. Subsequently, as shown in Fig.~\ref{fig:TWPA_opt}c, the NM algorithm is employed to maximize the signal-to-noise ratio (SNR) and fidelity of the readout, ultimately identifying the best operating point for the TWPA. 
    \item\textbf{Parallel readout optimization}: Readout parameters, such as power, length, and frequency, are optimized to maximize the fidelity of parallel qubit readout.
    \item \textbf{Crosstalk readout optimization}: In superconducting quantum processors using dispersive readout, the AC Stark effect can induce qubit frequency shifts during readout, potentially causing frequency collisions and significant correlation readout errors as the processor scale increases. In the experiment, three representative circuits were selected for quick detection and, shown in Fig.~\ref{fig:Readout_Optimize}a. By considering the weighted readout fidelity and differences among these circuits, parameters such as readout frequency, qubit frequency, and readout power were optimized, shown in Fig.~\ref{fig:Readout_Optimize}b.
    \item \textbf{Dynamic Coupling Off technology}: Qubit frequency shifts during readout can result in re-coupling, leading to readout correlation errors. To mitigate this, DCO technology is also applied here, shown in Fig.~\ref{fig:Readout_Optimize}c and d. By introducing a DCO waveform with the same duration as the readout waveform, coupling is dynamically turned off, thereby reducing readout correlation errors. 
\end{enumerate}

Through the optimization process described above, we reduced the readout errors to a level comparable to the two-qubit gate error rate, achieving a parallel 02 readout fidelity of up to 99.35\%. The improvement in readout fidelity significantly enhanced our experimental efficiency, enabling us to complete entanglement witness tasks with fewer random stabilizer circuits. In previous work, $4.3992\times 10^8$ single measurements were required to estimate the fidelity of a 51-qubit cluster state. In contrast, in this experiment, only $1.0470\times 10^7$ samples were needed to effectively estimate the fidelity of a 95-qubit cluster state, with a lower fidelity error.

\section{Correlations of the measurement error}
In this section, we will demonstrate that our processor exhibits very small measurement correlation errors. Experimentally, we characterize the correlated measurement error by covariance
\begin{equation}
	{\mathbf {cov}}[E_i,E_j]=\mathbb{E}[E_i{\wedge}E_j]-\mathbb{E}[E_i]\mathbb{E}[E_j], 
\label{cov}
\end{equation}
where $E_{i(j)}$ denote the event that the measurement error appears in $Q_{i(j)}$, $E_i{\wedge}E_j$ denote the event of joint error with $E_i$ and $E_j$.

\begin{table}[!ht]
	\centering
	\begin{tabular}{|l|l|l|l|}
		\hline
		\thead{Exepetiment} & \thead{Number of qubits} & \thead{Number of random satates } & \thead{Number of repeat measure } \\ \hline
		1D cluster state & 95 & 21500 & 3000 \\ \hline
		2D cluster state (sparse) & 72 & 12200 & 3000 \\ \hline
		2D cluster state (full) & 57 & 7200 & 3000 \\ \hline
	\end{tabular}
	\caption{The numbers of random initial state and repeat measurement in our experiments for calibrated correlated error.}
	\label{random_table}
\end{table}

We calibrated the correlated measurement error by the random state preparation scheme~\cite{Cao2023}. The experiment parameters for characterizing correlated measurement errors in cluster state preparations are shown in Table \ref{random_table}. For the 95Q 1D cluster state preparation experiment, we uniformly randomly generated 21,500 initial input states in $\{0,1\}^{95}$, and independently repeated the measurement 3,000 times for each state. The results of the covariance for the measurement error of all qubit pairs are shown in Fig.~\ref{fig:corralateM_fig}a. According to the topological relationships of qubit pairs on the processor, which can be classified into three categories: adjacent pairs (coupled by couplers, a total of 150 pairs), share-readout-line pairs (with the same Purcell-filter and reading-signal line, a total of 257 pairs), and other non-local pairs (remaining cases, a total of 4,058 pairs). In Fig.~\ref{fig:cov_bar_fig}, we present the proportion of qubit pairs with any type of covariance exceeding $0.1\%$, relative to the total number of measured pairs.

\begin{figure}[htb]
    \centering
    \includegraphics[scale=0.42]{ 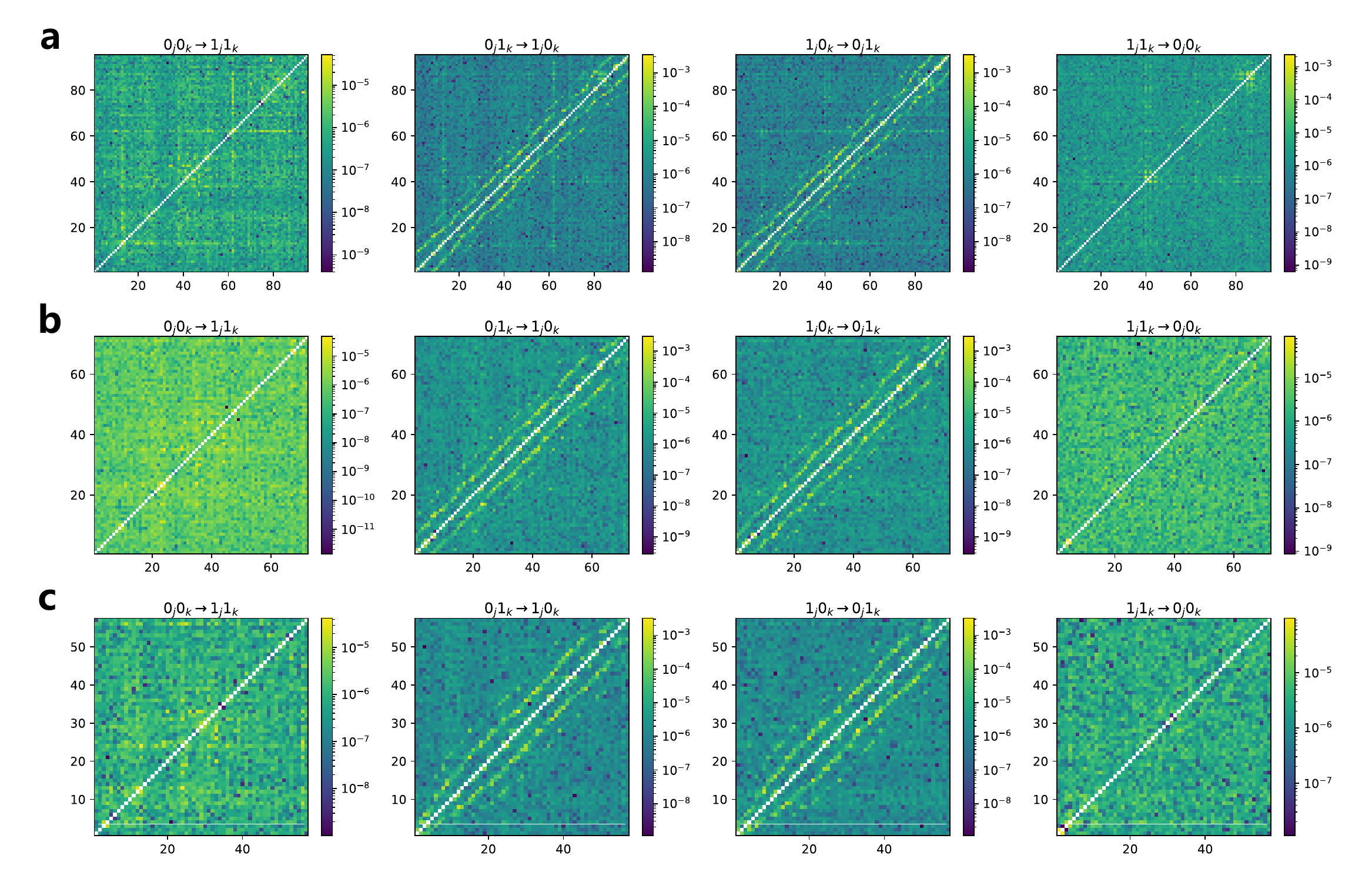}
    \caption{The covariance of all flip errors for qubit pairs in (a) the 95Q 1D cluster state, (b) the 72Q sparse-2D cluster state, and (c) the 57Q full-2D cluster state. 
    The $jk \rightarrow \bar{j}\bar{k}$ denotes the covariance of two qubits with input state $|j, k\rangle$ and output state $|\bar{j},\bar{k}\rangle$. }
    \label{fig:corralateM_fig}
\end{figure}

The experimental results indicate that the correlated measurement error mainly arises from adjacent pairs. This is because the AC-Stark effect of the readout waveform will cause a shift in qubit frequencies, leading to frequency collisions between qubits. Such collisions result in state leakage, which in turn introduces measurement correlation. 

To further analyze the impact of correlated measurement errors on the experimental results, we presented the statistical distribution of two-qubit joint error and covariance from the random state preparation data of the one-dimensional cluster state preparation experiment, as shown in Fig.~\ref{fig:cov_joint_fig}. The results show that the median covariance is at least one order less than the joint measurement error. Therefore, we can calculate the cluster state fidelity using the measurement error mitigation by the Tensor Product(TP) scheme while neglecting the correlated measurement errors. Additionally, in the following chapters, we provide a more detailed discussion of the impact of correlated errors. For comparison, using the Continuous Time Markov Processes (CTMP) scheme, we present the cluster fidelity results obtained after considering the main correlated errors, which are consistent with the results by the tensor-product scheme. 

\begin{figure}[htb]
    \centering
    \includegraphics[scale=0.6]{ 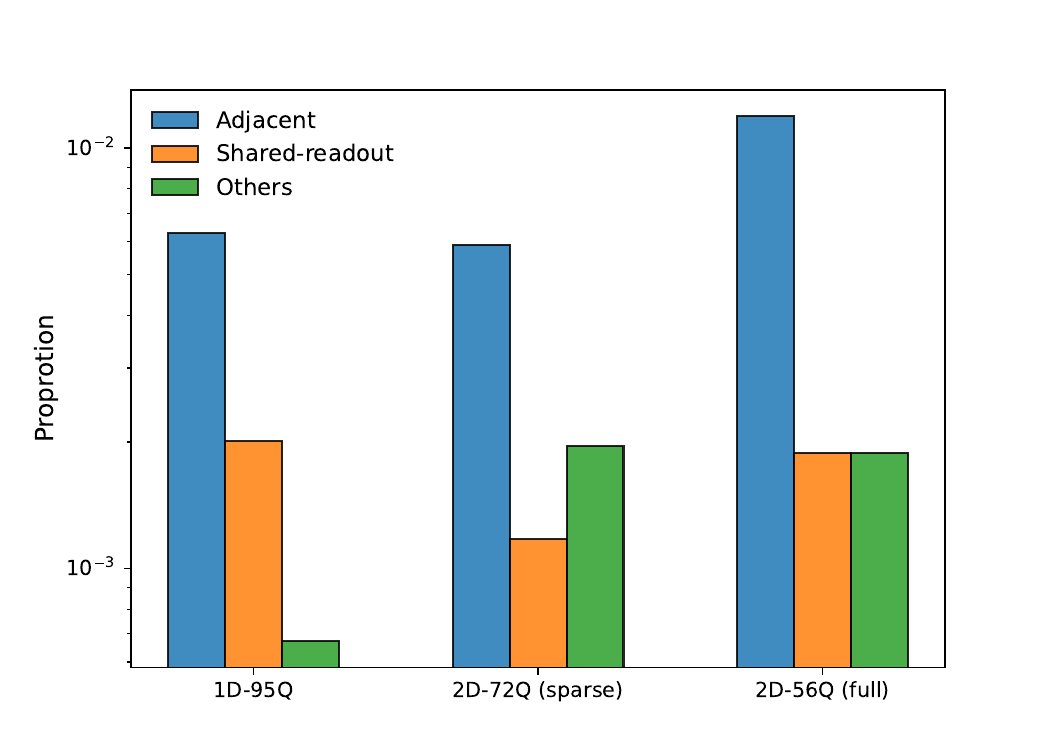}
    \caption{Proportion of qubit pairs with covariance exceeding $0.1\%$ across the three qubit pair categories: adjacent, shared-readout, and others. The proportions are calculated relative to the total number of measured pairs for each experiment.}
    \label{fig:cov_bar_fig}
\end{figure}

\begin{figure}[htb]
    \centering
    \includegraphics[scale=0.4]{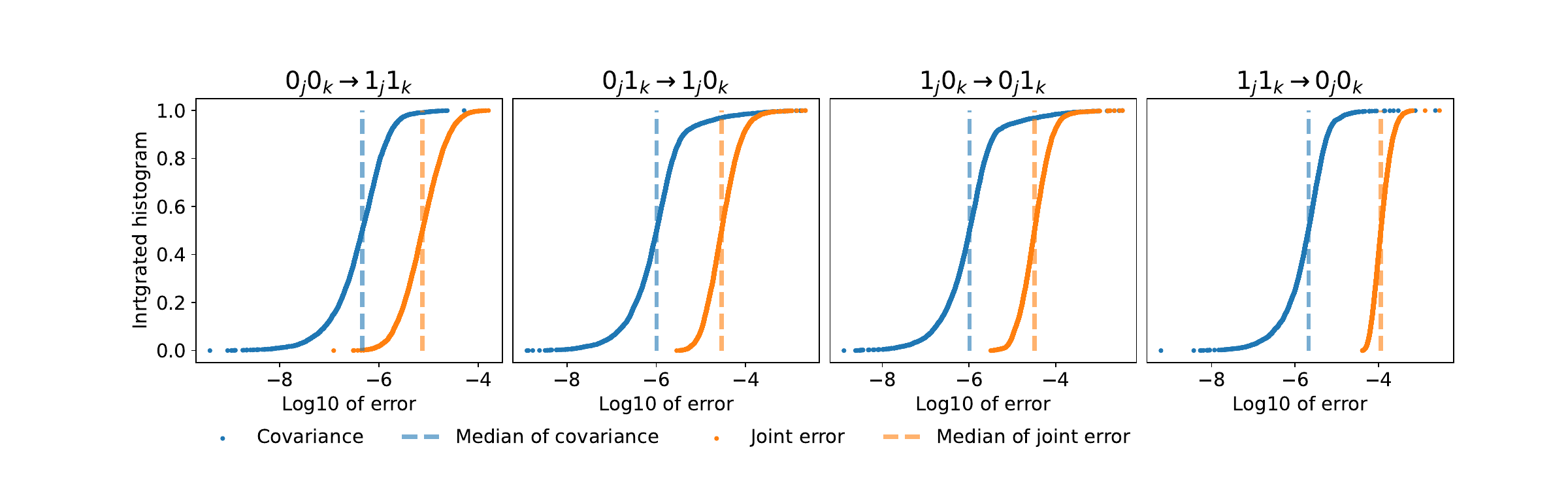}
    \caption{Distribution of covariance and joint-error for the qubit pairs having the same device as 95Q 1D cluster state, where $jk \rightarrow \bar{j}\bar{jk}$ denotes the covariance of two qubits with input state $|j, k\rangle$ and output state $|\bar{j},\bar{k}\rangle$. }
    \label{fig:cov_joint_fig}
\end{figure}

\section{Correction of measurement errors}

\subsection{Theoretical analysis}
Measurement errors play a critical role in the accuracy of quantum device readout. Accurate calibration of these errors is crucial for improving the overall performance of quantum computations, particularly in enhancing fidelity and enabling reliable entanglement witnessing in cluster states. 

In the absence of measurement errors, the estimation \( v \) can be obtained by aggregating the results of \( M \) distinct measurement settings \( P_i \) (on $n$ qubits), with each setting being repeated \( K \) times to gather sufficient statistics. The estimation formula for \( v \) is given by
\begin{equation}
    v = \frac{1}{MK}\sum_{m=1}^M \operatorname{sign}(P_m)\sum_{k=1}^K O(S_{mk}),
\end{equation}
where \( S_{mk} \) denotes the outcome string obtained from the \( k \)-th repetition of the \( m \)-th measurement setting, and \( O(S) \) is a function that maps the outcome string \( S \) to a scalar value. Specifically, \( O(S) \) is defined as 
\begin{equation}
    O(S) = \prod_{i=1}^n \mu(s_i):=\prod_{i=1}^n \left(1 - 2s_i\right),
\end{equation}
where \( s_i \) represents the binary outcome for the \( i \)-th qubit. 

To account for the impact of readout errors, we introduce the $2^n\times 2^n$ readout noise matrix $\Lambda$, which represents the cumulative noise affecting the measurement outcomes of a quantum state. 
This matrix models the deviation between the ideal measurement outcomes and the actual, noisy observations. 
To correct for these deviations, we compute the inverse of the noise matrix, $\Lambda^{-1}$, and apply it to the observed measurement statistics. The calibrated estimation becomes
\begin{equation}
    v_c = \frac{1}{MK}\sum_{m=1}^M \operatorname{sign}(P_m)\sum_{k=1}^K \sum_x O(x)\langle x|\Lambda^{-1}|S_{mk}\rangle.
    \label{eq:calibration}
\end{equation}
This process effectively ``undoes" the influence of the readout noise, allowing us to recover a more accurate representation of the underlying quantum state.

The error in estimating the expectation of a random variable using the mean of a finite number of samples is governed by Hoeffding’s inequality. 
For a random variable $v$, which is the average of measured outcomes, Hoeffding’s inequality provides the following bound on the probability of deviation from the true expectation $\mathrm{E}[v]$
\begin{equation}
    \mathrm{Pr}\left(\left|v - \mathrm{E}[v]\right| \geq \epsilon\right) \leq 2 \exp\left(-\frac{2 N^2 \epsilon^2}{\sum_{m=1}^N (b_m - a_m)^2}\right),
    \label{eq:error_Hoeffding}
\end{equation}
where $N = MK$ is the total number of measurements, and $b_m$ and $a_m$ are the upper and lower bounds of the $m$-th random variable, respectively.
For our framework, the measurement outcomes are bounded between $-1$ and 1, so $b_m = 1$ and $a_m = -1$. 
Substituting these bounds, the denominator simplifies to $4N$, and Hoeffding’s inequality becomes $\mathrm{Pr}\left(\left|v - \mathrm{E}[v]\right| \geq \epsilon\right) \leq 2 \exp\left(- N \epsilon^2/2\right)$.
To estimate the error $\epsilon$ with a confidence level of $1-\delta$, we solve for $\epsilon$ such that the probability of error is $\delta$
\begin{equation}
    \epsilon = \sqrt{\frac{2\log(2/\delta)}{N}}.
\end{equation}
Alternatively, for a desired error $\epsilon$, the total number of measurements $N$ required to achieve a confidence level $1-\delta$ can be expressed as
\begin{equation}
    N \geq \left\lceil \frac{2\log(2/\delta)}{\epsilon^2} \right\rceil.
\end{equation}

If we consider $v$ as an average over $M$ independent measurement settings, $v = \frac{1}{M} \sum_{m=1}^M \langle v_m \rangle$, where different repetitions of the operator $v_i$ follows an identical distribution, the total number of measurements is distributed as $N = MK$. 
From Hoeffding’s inequality, we can derive separate conditions on $M$ and $K$
\begin{equation}
    M \geq \left\lceil \frac{2\log(2/\delta)}{\epsilon^2} \right\rceil,
\end{equation}
and consequently, the number of repetitions $K$ satisfies $K \geq 1$. 
This ensures that $N = MK$ meets the required conditions for error bounds.

Through a straightforward derivation, it becomes evident that the presence of bit-flip errors effectively reduces the average value of $v$. 
To counteract this influence, we apply the inverse of the noise operation, denoted as $\Lambda^{-1}$, which restores the original scale by ``dilating''  $v$ back. 
However, this correction process introduces a trade-off: while it compensates for the shrinkage, it simultaneously broadens the range of the random variables, thereby amplifying the associated statistical uncertainty. 
As a result, the error in the final estimation increases, necessitating a higher number of measurements for accurate calibration.
To formalize this intuition, we define the \textit{overhead} of the error mitigation method as
\begin{equation}
    \Gamma = \max _S \sum_x\left|\langle x| \Lambda^{-1}| S\rangle \right|.
\end{equation}
This quantity measures the extent to which the error mitigation procedure amplifies the variance. 
Its impact on the error analysis modifies the bounds on the estimation error and the required sample size as follows
\begin{equation}
    \epsilon = \Gamma\sqrt{\frac{2\log(2/\delta)}{N}}, \quad N \geq \left\lceil \frac{2\Gamma^2\log(2/\delta)}{\epsilon^2} \right\rceil.
\end{equation}
Here, $\epsilon$ denotes the estimation error, $\delta$ the confidence interval, and $N$ the number of measurements.

We further introduce the concept of \textit{noise strength}, defined as
\begin{equation}
    \gamma = -\min_{x\in\{0,1\}^n}\langle x|G|x\rangle,
\end{equation}
where $G = \log(\Lambda)$. 
The relationship between $\Gamma$ and $\gamma$ is approximately given by $\Gamma \approx e^{2\gamma}$.

Due to the enormous size of $\Lambda$ ($2^n\times 2^n$), we usually cannot compute it nor its inverse, which urges us to find a simpler model to decompose it into smaller subspaces. 
In this section, we explore two distinct noise models to address readout error mitigation: the qubit-independent noise model and the correlated Markovian noise model. 

\subsubsection{Qubit-independent noise calibration}
Under the qubit-independent noise model, we assume that readout errors act independently on each qubit in the quantum system. 
This assumption is often valid for devices where noise sources, such as measurement crosstalk, are minimal or negligible. 
By treating each qubit's noise contribution as independent, the overall noise matrix $\Lambda$ can be expressed as a tensor product of individual qubit noise matrices
\begin{equation}
    \Lambda = \bigotimes_{i=1}^n \Lambda_i,
    \label{eq:TP}
\end{equation}
where $\Lambda_i$ represents the noise matrix corresponding to the $i$-th qubit's readout process. Thus 
Each $\Lambda_i$ is typically a \(2 \times 2\) stochastic matrix,
\begin{equation}
    \Lambda_i=\left[\begin{array}{ll}
    p(0 | 0) & p(0 | 1) \\
    p(1 | 0) & p(1 | 1)
    \end{array}\right]:=\left[\begin{array}{cc}
    1-\epsilon_i & \eta_i \\
    \epsilon_i & 1-\eta_i
    \end{array}\right],
    \label{eq:TP_form}
\end{equation}
which captures the probabilities of correctly or incorrectly reading out the two computational basis states, \( \ket{0} \) and \( \ket{1} \).

The independence assumption significantly simplifies the calibration process by enabling the characterization of each qubit's readout noise independently. 
This decouples the calibration of individual qubits from the collective behavior of the entire system, thus beneficial for scalability.
In practice, we prepare the quantum device with \(2n^2\) randomly chosen 01-bit strings as input states, to gain insights into two-qubit crosstalk effects (though this model may not strictly require it). 
By repeating this process for preparing each initial state multiple times and measuring the outcomes, we collect sufficient statistics to estimate the probabilities associated with each computational basis state on individual qubits. 
Specifically, we estimate the probabilities of state transitions (bit-flip errors) for each qubit as
\begin{equation}
    \begin{aligned}
    \hat{\epsilon}_i & = \frac{\sum_{x, y} m(y, x)\left\langle 1 | y_i\right\rangle\left\langle x_i | 0\right\rangle}{\sum_{x, y} m(y, x)\left\langle x_i | 0\right\rangle}, \\
    \hat{\eta}_i & = \frac{\sum_{x, y} m(y, x)\left\langle 0 | y_i\right\rangle\left\langle x_i | 1\right\rangle}{\sum_{x, y} m(y, x)\left\langle x_i | 1\right\rangle},
\end{aligned}
\end{equation}
where \(m(y, x)\) represents the number of occurrences of the measured output state \(y\) when the input state was \(x\). 

For the tensor product (TP) form matrix \(\Lambda=\bigotimes \Lambda_i\), one of its key advantages is the ability to perform operations efficiently at the level of individual qubits. 
Specifically, the inverse of \(\Lambda\) can be computed directly as \(\Lambda^{-1}=\bigotimes \Lambda_i^{-1}\), leveraging the independence of the qubits in the tensor product structure. 
This decomposition is particularly useful when addressing readout errors, as it allows for these errors to be mitigated independently for each qubit without requiring a computationally expensive inversion of the full matrix \(\Lambda\).
In this framework, each \(2 \times 2\) sub-matrix \(\Lambda_i\) from Eq.~(\ref{eq:TP_form}) has a well-defined analytical inverse given by  
\begin{equation}
    \hat{\Lambda}_i^{-1}= \frac{1}{1-\hat{\epsilon}_i-\hat{\eta}_i}\left[
    \begin{array}{cc}
    1-\hat{\eta}_i & -\hat{\eta}_i \\
    -\hat{\epsilon}_i & 1-\hat{\epsilon}_i
    \end{array}\right],
    \label{eq:inverse}
\end{equation}
from the estimated probabilities of bit-flip errors derived from prior characterization of the measurement process. 
This explicit form enables an efficient correction of measurement errors, provided the condition \(1-\hat{\epsilon}_i-\hat{\eta}_i > 0\) is satisfied, which ensures the inverse matrix remains well-defined.

Substituting this correction into the calibrated estimation formula in Eq.~(\ref{eq:calibration}), and incorporating the tensor product structure as shown in Eq.~(\ref{eq:TP}), we obtain
\begin{equation}
    v_c = \frac{1}{MK}\sum_{m=1}^M \operatorname{sign}(P_m)\sum_{k=1}^K \prod_{i=1}^n \sum_{x_i\in\{0,1\}} \mu(x_i)\langle x_i|\Lambda_i^{-1}|s_i\rangle.
    \label{eq:TP_final}
\end{equation}
This expression highlights that the full \(2^n\)-dimensional Hilbert space can be reconstructed by combining the \(n\) qubit-wise subspaces, enabling efficient computation in systems with large numbers of qubits.

Leveraging the inverse matrix provided in Eq.~(\ref{eq:inverse}), we derive the overhead and noise strength as
\begin{equation}
    \Gamma_{\text{TP}} = \prod_{i=1}^n \frac{1+\left|\epsilon_i-\eta_i\right|}{1-\epsilon_i-\eta_i}, \quad \gamma_{\text{TP}} = \sum_{i=1}^n \max\{\epsilon_i, \eta_i\}.
    \label{eq:TP_overhead}
\end{equation}
Here, $\epsilon_i$ and $\eta_i$ represent the parameters characterizing the noise affecting the $i$-th qubit. 
These expressions illustrate the cumulative nature of noise effects in a system with $n$ qubits.
The computational complexity for estimating quantities governed by Eq.~(\ref{eq:TP_final}) scales as $O(nMK) = O\left(n\epsilon^{-2} \Gamma_{\text{TP}}^2 \right)$, or equivalently $O\left(n\epsilon^{-2} \exp(4\gamma_{\text{TP}})\right)$, where $M$ and $K$ are constants related to the algorithm’s iterative structure.

\subsubsection{Correlated Markovian noise calibration}
Although the qubit-independent noise model is both economical and often sufficiently accurate, it fails to capture correlations between two-qubit interactions. 
Experimental evidence shows that the state-switching probability of a qubit is frequently influenced by the states of its neighboring qubits. 
This necessitates a more comprehensive model that accounts for such correlations.

To address this, Bravyi et al.~\cite{Bravyi2021} introduced a noise model incorporating two-qubit correlations within the framework of Continuous-Time Markov Processes (CTMP).
This model provides a systematic way to include both one-qubit and two-qubit noise interactions.

The correlated Markovian noise model is mathematically defined as
\begin{equation}
    \Lambda = \exp(G), \quad G =\sum_{i=1}^{2n^2}r_i G_i
    \label{eq:exp_gen}
\end{equation}
where $\exp(G)=\sum_{\alpha=0}^{\infty} G^\alpha / \alpha!$ is the matrix exponential, $r_i$ is the error rate coefficient which can be calibrated by system measurement outcomes, and $G_i$ are noise generators that encompass various types of one-qubit and two-qubit bit-flip noise interactions.
The complete set of generators $G_i$, along with their corresponding noise processes and multiplicities, is detailed in Table~\ref{tab:CTMP_gen}, including single-qubit bit-flip errors and two-qubit correlated errors.
\begin{table}[h!]
    \centering
    \begin{tabular}{c|c|c}
    \hline
       Generator $G_i$  &  Readout error & Number of $G_k$\\\hline
       $|1\rangle\langle0|_{j}-|0\rangle\langle0|_j$  &  $0_j\rightarrow 1_j$ & $n$\\\hline
       $|0\rangle\langle1|_{j}-|1\rangle\langle1|_j$  &  $1_j\rightarrow 0_j$ & $n$\\\hline
       $|10\rangle\langle01|_{jk}-|01\rangle\langle01|_{jk}$  &  $01_{jk}\rightarrow 10_{jk}$ & $n(n-1)/2$\\\hline
       $|01\rangle\langle10|_{jk}-|10\rangle\langle10|_{jk}$  &  $10_{jk}\rightarrow 01_{jk}$ & $n(n-1)/2$\\\hline
       $|11\rangle\langle00|_{jk}-|00\rangle\langle00|_{jk}$  &  $00_{jk}\rightarrow 11_{jk}$ & $n(n-1)/2$\\\hline
       $|00\rangle\langle11|_{jk}-|11\rangle\langle11|_{jk}$  &  $11_{jk}\rightarrow 00_{jk}$ & $n(n-1)/2$\\\hline
    \end{tabular}
    \caption{One-qubit and two-qubit generators of correlated Markovian noise, where $j, k \in [n]$ and $j < k$. The total number of generators is $2n^2$.}
    \label{tab:CTMP_gen}
\end{table}

To describe the correlated readout error between the $j$-th and $k$-th qubits, we use a $4 \times 4$ matrix $\Lambda(j, k)$, defined as
\begin{equation}
    \langle u| \Lambda(j, k)|v\rangle=\operatorname{Pr}\left[y_{j, k}=u \mid x_{j, k}=v, y_{-j,-k}=x_{-j,-k}\right],
    \label{eq:conditional}
\end{equation}
where $u, v \in \{00, 01, 10, 11\}$, and $x_{j,k}$ represents the two-bit binary measurement outcomes for the $(j,k)$ qubits, while $x_{-j,-k}$ represents the measurement outcomes for all other $n-2$ qubits.

Given that the readout error rates $r_i$ are small, we approximate the $(j, k)$-th term of $G$ as:
\begin{equation}
    \hat{G}(j,k) = \mathcal{P}(\log(\hat{\Lambda}(j,k))), \quad j<k,
    \label{eq:log_gen}
\end{equation}
where $\mathcal{P}$ is a channel that maps all negative off-diagonal elements in the input matrix to zero, and $\log$ denotes the matrix logarithm, with the branch chosen such that $\log(\hat{\Lambda}(j, k)) = 0$ if $\hat{\Lambda}(j, k) = I$. 
This approximation is not exact, as the matrix exponential and logarithm are usually affected by commutation. 
The higher-order terms arising from commutation are given by
\[
\begin{aligned}
\exp(G) = \exp\left(\sum_{i=1}^{2n^2} r_i G_i\right)
&= \left(\prod_{i=1}^{2n^2} \exp(r_i G_i)\right) \cdot \exp(\text{higher-order commutators}) \\
&= \left(\prod_{j,k \in [n]} \exp\left[G(j,k)\right]\right) \cdot \left(1 + O(r^2)\right) \\
&= \left(\prod_{j,k \in [n]} I \otimes \cdots \Tilde{\Lambda}(j,k) \cdots \otimes I\right) \cdot \left(1 + O(r^2)\right),
\end{aligned}
\]
where $r=\max_i r_i$. 
However, \(\Tilde{\Lambda}(j,k)\) is not equivalent to $\Lambda(j,k)$ due to residual commutation effects.
The relationship between $\Lambda(j,k)$ and \(\Tilde{\Lambda}(j,k)\) can be expressed as
\[
\Lambda(j,k)=\operatorname{Tr}_{\{-j,-k\}}\left(\prod_{j,k\in [n]} I \bigotimes \cdots \Tilde{\Lambda}(j,k)\cdots\bigotimes I\right)
\]
where the partial trace removes all qubits except $j$ and $k$.
Because different \(\Tilde{\Lambda}(j,k)\) terms interact through the non-commutation, the resulting $\Lambda(j,k)$ depends on contributions from the broader system, and it becomes clear that using $\Lambda(j,k)$ to approximate $\Tilde{\Lambda}(j,k)$ is inherently a first-order, linear approximation. 
Fortunately, the error rates are usually small enough to support our approximation.

To estimate the error rates and the stochastic matrix $\Lambda(j, k)$ associated with the quantum processor, we generate $2n^2$ randomly chosen 01-bit strings as input states. 
Each initial state is prepared and measured multiple times to obtain statistically robust results. 
This procedure ensures sufficient sampling across the Hilbert space for accurate characterization of the system’s noise properties.

From the definition in Eq.~(\ref{eq:conditional}), the elements of the estimated stochastic matrix $\Lambda(j, k)$ can be expressed as
\begin{equation}
    \langle u| \hat{\Lambda}(j, k)|v\rangle = \frac{\sum_{x, y} m(y, x) \left\langle w | y_{j,k} \right\rangle \left\langle x_{j,k} | v \right\rangle \left\langle x_{-j,-k} | y_{-j,-k} \right\rangle}{\sum_{x, y} m(y, x) \left\langle x_{j,k} | v \right\rangle \left\langle x_{-j,-k} | y_{-j,-k} \right\rangle},
\end{equation}
where $m(y, x)$ represents the number of counts of the measured output state $y$ for a given input state $x$. 
This ratio quantifies the conditional probabilities governing the stochastic process for qubits $j$ and $k$ while marginalizing over the remaining $n-2$ qubits. 
Here, $x_{j,k}$ and $y_{j,k}$ denote the states of the $j$-th and $k$-th qubits, while $x_{-j,-k}$ and $y_{-j,-k}$ capture the states of all other qubits.

Using Eq.~(\ref{eq:log_gen}), we compute the two-qubit error rates $r_i$, which describe transitions between specific qubit-pair states. 
For instance, the rate for the $00_{jk} \rightarrow 11_{jk}$ transition is estimated as
\[
\hat{r}{(00_{jk} \rightarrow 11_{jk})} = \langle 11 | \hat{G}(j, k) | 00 \rangle,
\]
where $\hat{G}(j, k)$ is the error generator matrix derived from the stochastic matrix $\hat{\Lambda}(j, k)$.
Similarly, one-qubit error rates can be inferred by summing over contributions from the relative two-qubit terms across all pairs involving the qubit of interest. 
For example, the bit-flip error rate for the $0_j \rightarrow 1_j$ transition is given by
\[
\hat{r}{(0_{j} \rightarrow 1_{j})} = \frac{1}{2(n-1)} \sum_{k \neq j} \left( \langle 10 |_{jk} \hat{G}({j, k}) | 00 \rangle_{jk} + \langle 11 |_{jk} \hat{G}({j, k}) | 01 \rangle_{jk} \right).
\]
Here, the factor $1/(2(n-1))$ normalizes the sum to account for all qubit pairs involving qubit $j$.

After calibrating the elements of the stochastic matrices, we are ready to compute the inverse of $\Lambda$. 
As detailed in Eq.~(\ref{eq:exp_gen}), the inverse is related to the generator $G$ through the matrix exponential, and using the decomposition $Q = I + \gamma^{-1} G$, we can rewrite $\Lambda^{-1}$ as
\begin{equation}
    \begin{aligned}
    \Lambda^{-1} = \exp(-G) &= \exp\left(\gamma I - \gamma Q\right) \\
    &= \exp(\gamma)\sum_{\alpha=0}^\infty \frac{(-\gamma)^\alpha Q^\alpha}{\alpha!}.
    \end{aligned}
    \label{eq:inverse_combinition}
\end{equation}
where the parameter $\gamma = -\min _{x \in \{0,1\}^n} \langle x| G |x\rangle$. 
Interestingly, $\gamma$ can be viewed as a classical Ising-like Hamiltonian with two-spin interactions. 
Heuristic methods such as simulated annealing or genetic algorithms can provide effective solutions in practice.

Instead of explicitly summing infinite terms in the series expansion, we propose a sampling-based method to estimate $\Lambda^{-1}$. 
From Eq.~(\ref{eq:inverse_combinition}), the inverse matrix can be expressed as
\[\Lambda^{-1} = \sum_{\alpha=0}^\infty c_\alpha S_\alpha,\]
where
\begin{equation}
    c_\alpha = \frac{e^\gamma (-\gamma)^\alpha}{\alpha!}, \quad S_\alpha = Q^\alpha.
\end{equation}

The coefficients $c_\alpha$ correspond to the terms in a weighted sum, and the normalization coefficient $\Gamma$ is given by
\begin{equation}
    \Gamma = \sum_{\alpha=0}^\infty |c_\alpha| = \sum_{\alpha=0}^\infty \frac{e^\gamma \gamma^\alpha}{\alpha!} = e^{2\gamma}.
    \label{eq:Gamma_exp}
\end{equation}
Thus, we can rewrite $\Lambda^{-1}$ as
\begin{equation}
    \Lambda^{-1} = \Gamma \sum_{\alpha=0}^\infty q_\alpha Q^\alpha,
\end{equation}
where $q_\alpha = c_\alpha / \Gamma = \frac{e^{-\gamma}(-\gamma)^\alpha}{\alpha!}$. 
Notably, $q_\alpha$ follows a Poisson distribution with mean $\gamma$.

Thanks to this structure, we can efficiently approximate $\Lambda^{-1}$ through sampling:
\begin{enumerate}
    \item Sample $\alpha$: Draw $\alpha$ from the Poisson distribution $q_\alpha$.
    \item Apply $Q^\alpha$: Iteratively apply the matrix $Q^\alpha$ to the measurement outcomes $|S\rangle$.
\end{enumerate}
In the second step, the matrix $Q = I + \gamma^{-1} G$ can be interpreted as a Markovian transfer matrix. 
Applying $Q^\alpha$ to the measurement outcomes $|S\rangle$ corresponds to performing a random walk with $Q$ for $\alpha$ steps. 
Since $Q$ is a large, potentially non-local matrix, we can further decompose it as
\[Q = I + \gamma^{-1} \sum_{j,k} G(j,k).\]
For each step of the random walk, we randomly sample from the components of $Q$, selecting either $I$ or $G(j,k)$ with probabilities proportional to their contributions in the decomposition.
For $\alpha$ steps, we sequentially apply the sampled components $G_{r_j}$ to the state $|S\rangle$.
This decomposition reduces the complexity of handling the full matrix $Q$, and also allows us to make good use of the estimates in Eq.~(\ref{eq:log_gen}).

Repeating the above sampling procedure $T$ times, we calculate the calibrated estimation for $M$ measurement settings and $K$ repetitions as
\begin{equation}
    \begin{aligned}
    v_c &= \frac{1}{MK} \sum_{i=1}^M \operatorname{sign}(P_i) \sum_{k=1}^K \sum_x O(x) \langle x | \Lambda^{-1} | S_{ik} \rangle \\
    &= \frac{\Gamma}{MKT} \sum_{i=1}^M \operatorname{sign}(P_i) \sum_{k=1}^K \sum_{t=1}^T \sum_x O(x) \langle x | \prod_{j=1}^{\alpha_t} G_{r_j} | S_{ik} \rangle.
    \end{aligned}
    \label{eq:CTMP_final}
\end{equation}
Here, $\Gamma$ serves as a normalization coefficient that effectively scales every outcome, reflecting the overhead introduced by the error mitigation method.
This expression uses the sampled $\alpha_t$ and the corresponding matrix sequence $\prod_{j=1}^{\alpha_t} G_{r_j}$ to approximate the application of $\Lambda^{-1}$ on the measurement outcomes. 
By averaging over $T$ independent trials, the estimator converges to the calibrated value of $v_c$.

The noise strength $\gamma_{\text{CTMP}}$ is determined via classical search algorithms. 
Using Eq.(\ref{eq:Gamma_exp}), the overhead is expressed as
\begin{equation}
    \Gamma_{\text{CTMP}} = \exp\left(2\gamma_{\text{CTMP}}\right).
\end{equation}
The computational complexity for evaluating Eq.~(\ref{eq:CTMP_final}) is given by $O(n\gamma_{\text{CTMP}}MKT) = O\left(n\epsilon^{-2}\gamma_{\text{CTMP}} \Gamma_{\text{CTMP}}^2 \right)$, or equivalently $O\left(n\epsilon^{-2}\gamma_{\text{CTMP}} \exp(4\gamma_{\text{CTMP}})\right)$.

Upon careful examination, we observed that the CTMP overhead for the 1D 95-qubit cluster state becomes prohibitively large, leading to an impractically high number of required measurements. 
To address this issue, we propose customizing the summation set $G$ in Eq.~(\ref{eq:exp_gen}), allowing for greater flexibility in the CTMP method. The method can be tailored as follows:
\begin{enumerate}
\item When the summation is restricted to single-qubit generators, the CTMP method reduces to the TP method, as no multi-qubit correlations are included.
\item If only nearest-neighbor two-qubit bit-flip noise is considered, the computational overhead is significantly reduced, making the approach more practical for large-scale systems.
\item Including all two-qubit correlations across the system enables the method to capture a broader range of correlated errors.
\item By calculating the two-qubit correlation coefficients for all bit-flip errors, defined as
\begin{equation}
    r(E_1, E_2)=\frac{\operatorname{cov}(E_1, E_2)}{\sqrt{\operatorname{Var}(E_1) \operatorname{Var}(E_2)}},
\end{equation}
where $\operatorname{cov}(E_1, E_2)=\mathbb{E}(E_1, E_2)-\mathbb{E}(E_1)\mathbb{E}(E_2), \operatorname{Var}(E_1)=\mathbb{E}(E_1^2)-\mathbb{E}(E_1)^2$, we can identify qubit pairs with coefficients exceeding a threshold of 0.3 and selectively include them in the summation.
\end{enumerate}

The table below compares the relative noise strength ($\gamma$), overhead ($\Gamma$), and total number of measurements needed in 1D 95-qubit cluster case for the TP model, single-qubit CTMP model, nearest-neighbor CTMP model, full 2-qubit correlation CTMP model and most-correlated CTMP model.

\begin{table}[h!]
    \centering
    \renewcommand{\arraystretch}{1.2}
    \begin{tabular}{|l|c|c|c|c|c|}
        \hline
        & \textbf{TP} & \multicolumn{4}{c|}{\textbf{CTMP}} \\ \hline
        & TP           & TP             & nearest-neighbor       & full-correlation & most-correlated \\ \hline
        noise strength ($\gamma$)    & 0.96559   & 0.95877   & 2.8609   & 91.093    & 1.7161   \\ \hline
        overhead ($\Gamma$)    & 6.8976    & 6.8042   & 305.44   & $1.3266\times 10^{79}$       & 30.945            \\ \hline
        number of measurements & $7.2326\times 10^6$    & $7.0381\times 10^6$    & $1.4183\times 10^{10}$   & $2.6752\times 10^{163}$      &  $1.4557\times 10^{8}$         \\ \hline
    \end{tabular}
    \caption{Comparison of TP and CTMP results.}
    \label{tab:results}
\end{table}

The results are presented in the Table.~\ref{tab:results} highlights the trade-offs between error suppression performance and computational resources for different noise-mitigation methods.
The most correlated configuration offers a more balanced solution. By selecting qubit pairs with correlation coefficients above a threshold (e.g., 0.3), the method efficiently suppresses significant correlated errors without excessive resource overhead.
We thus use this most correlated configuration in the following numerical simulations for CTMP models. 

\subsection{Experimental results}

We numerically compute the Tensor Product (TP) error-mitigated fidelity and its associated error for 1D and 2D cluster states, scaling up to 95 qubits, 72 qubits (3-pattern), and 57 qubits (4-pattern), respectively. 
These results are summarized in the figures below, demonstrating the scalability and effectiveness of our error mitigation approach. The main text presents the TP error-mitigated results for both 1D and 2D cluster states.
Beyond the theoretical bounds derived from error analysis, we also analyze the distribution of the average values for each measurement setting in the 1D 95-qubit, 2D 72-qubit, and 2D 57-qubit cases. This distribution provides insight into the variability across measurement outcomes and the reliability of the mitigated fidelity. Specifically, the variance across circuits for the 1D 95-qubit and 2D 72-qubit and 57-qubit cases is computed as $\sqrt{\mathrm{Var}(\overline{v})} = 0.064, 0.053, 0.051$, respectively, as shown in Fig.~\ref{fig:circuit_fidelity}. 
The relative variance of the total average estimation, given by $\epsilon = \sqrt{\mathrm{Var}(\overline{v})/M}$, serves as an empirical estimate of fidelity error, yielding values of $0.00108, 0.00108, 0.00114$, where $M$ represents the total number of circuits.

\begin{figure}
    \centering
    \includegraphics[width=\linewidth]{ 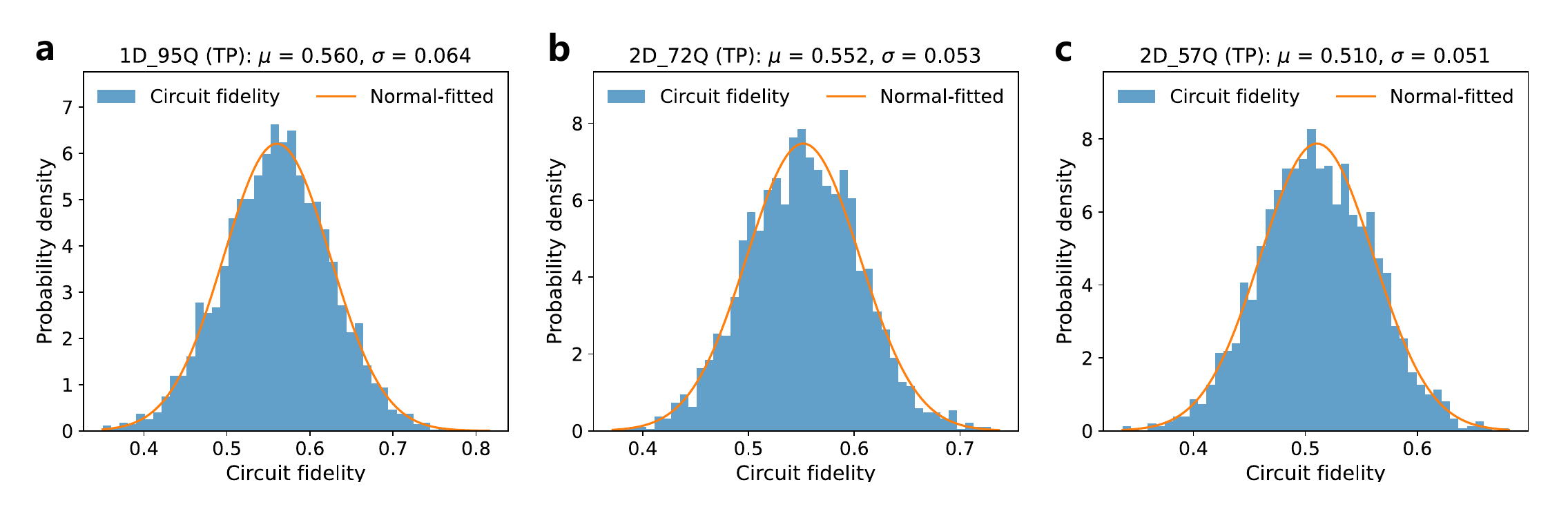}
    \caption{Circuit fidelity ditribution of 1D 95-qubit, 2D sparse 72-qubit and 2D full 57-qubit cases by TP method.}
    \label{fig:circuit_fidelity}
\end{figure}

\begin{figure}[htp]
    \centering
    \includegraphics[width=\linewidth]{ 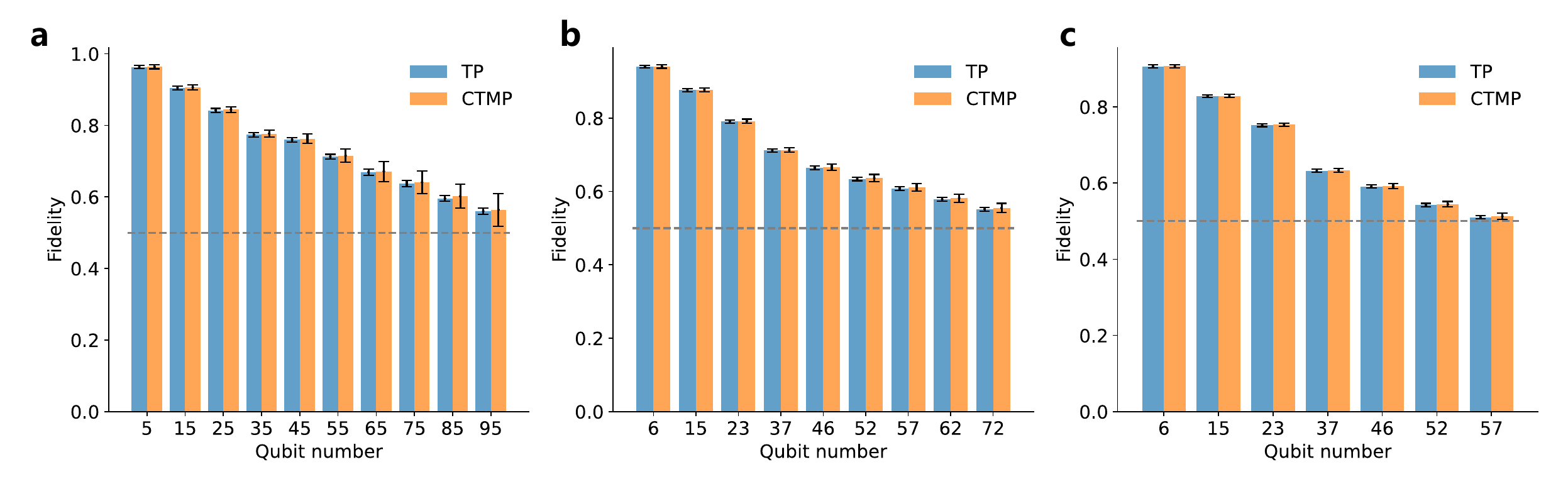}
    \caption{Comparison of fidelity between the TP and CTMP methods for 1D and 2D cluster states with varying qubit numbers. The error-mitigated fidelity and corresponding error bars are plotted against the number of qubits in the cluster state. Here we use the most correlated CTMP configuration, selecting qubit pairs with correlation coefficients greater than 0.3. For the 1D 95-qubit cluster state, the TP method achieves a fidelity of $0.5603 \pm 0.0084$, while the CTMP method reaches $0.5639 \pm 0.0453$. For the 2D 72-qubit 3-pattern case, the TP method yields a fidelity of $0.5519 \pm 0.0054$, compared to $0.5549 \pm 0.0127$ for the CTMP method. Similarly, for the 2D 57-qubit 4-pattern case, the TP method achieves a fidelity of $0.5104 \pm 0.0045$, while the CTMP method reaches $0.5128 \pm 0.0082$. All fidelity values are reported with $99.7 \%$ confidence.
    This figure is identical to Fig.~3 (d, e, f) in the main text.
    }
    \label{fig:ctmp_compare}
\end{figure}

To mitigate fidelity errors for 1D and 2D cluster states, we also employ the Continuous-Time Markov Processes (CTMP) model. A comparison of the CTMP and TP methods is presented in Fig.~\ref{fig:ctmp_compare}. While the CTMP model considers the most-correlated qubit pairs to reduce the overhead compared to the full CTMP approach, the improvement in fidelity is marginal, and the required number of measurements remains significantly higher. These observations support the adoption of the Tensor Product method in the main text as a more practical alternative.
In addition to the error analysis, we present the distribution of average values for each measurement setting, as shown in Fig.~\ref{fig:circuit_CTMP}. For 1D 95-qubit, 2D 72-qubit, and 2D 57-qubit cluster states, the variance across circuits in this scenario is computed as $\sqrt{\mathrm{Var}(\overline{v})} = 0.118, 0.059, 0.053$. The relative variance of the total average estimation is $\epsilon = \sqrt{\mathrm{Var}(\overline{v})/M} = 0.00200, 0.00121, 0.00118$, where $M$ is the total number of circuits, serving as an empirical estimate of the fidelity error.

\begin{figure}
    \centering
    \includegraphics[width=\linewidth]{ 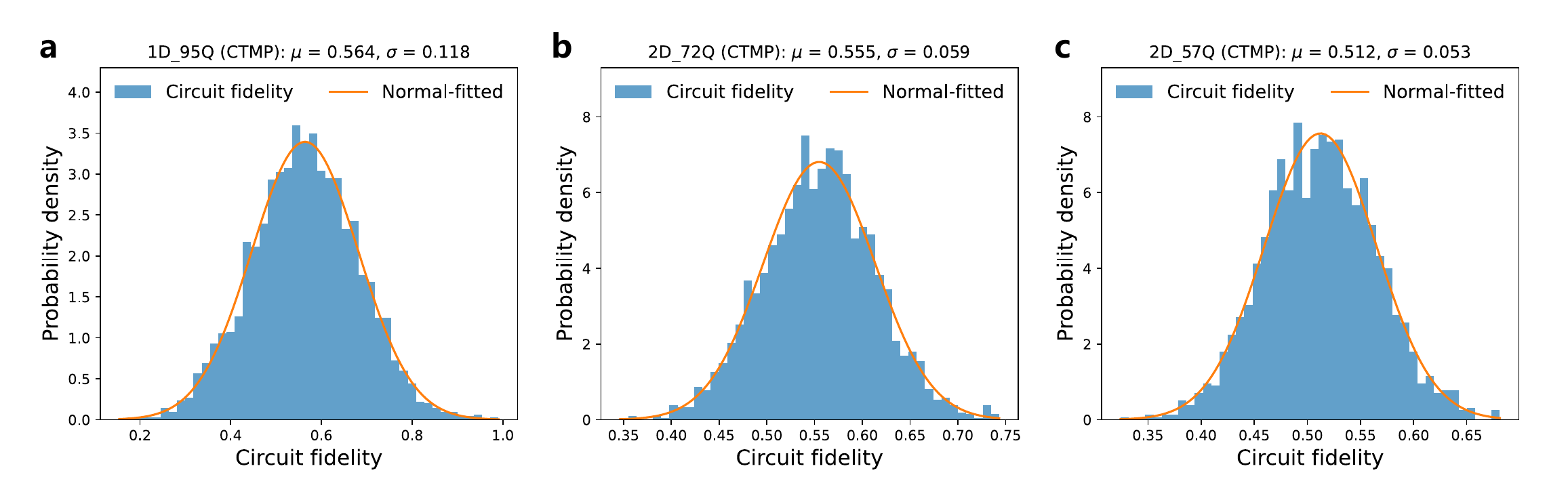}
    \caption{Circuit fidelity ditribution of 1D 95-qubit, 2D sparse 72-qubit and 2D full 57-qubit cases by CTMP method.}
    \label{fig:circuit_CTMP}
\end{figure}

\section{Simulation of Symmetry-Protected Topological Phases}
We discuss the simulation of symmetry-protected topological (SPT) phases by preparing a family of SPT states that are adiabatically connected to the cluster state. We then detect the SPT order based on the \emph{measurement-based wire protocol} as proposed in Ref.~\cite{Azses2020}. We introduce the cluster states as SPT states, the detection protocol, ways to tailor the protocol for compatibility with our experimental devices, and readout error mitigation for the method.

\subsection{Cluster states as symmetry-protected topological states}
In this section, we analyze the symmetry-protected topological (SPT) characteristics of cluster states.
Cluster states are widely recognized as the ground states of a specific parent Hamiltonian defined as
\begin{equation}
    H=-\sum_i h_i=-\sum_j X_j \prod_{k\in \mathcal{N}_j}Z_k,
\end{equation}
where $X$ and $Z$ are Pauli operators acting on the respective qubits, and $\mathcal{N}_j$ denotes the set of neighbors of qubit $j$. 
It is straightforward to verify that this Hamiltonian is the negative sum of the stabilizers of the cluster state.
This parent Hamiltonian exhibits several key properties:
\begin{enumerate}
    \item All terms in $H$ commute with each other.
    \item The Hamiltonian is gapped, with an integer-valued spectrum.
    \item The cluster state is the unique ground state of $H$.
\end{enumerate}
The SPT nature of the 1D cluster state arises from its invariance under a specific symmetry group, $\mathbb{Z}_2 \times \mathbb{Z}_2$.
This symmetry group corresponds to the conservation of the parity of all odd and even qubits, as defined by the operators
\begin{equation}
\begin{split}
    P_{\text {odd }}&=\prod_i h_{2 i+1}=\Pi_i X_{2 i+1},\\
    P_{\text {even }}&=\prod_i h_{2 i}=\Pi_i X_{2 i}.
\end{split}
\end{equation}
These symmetries ensure that the 1D cluster state remains within the same SPT phase under perturbations that respect $\mathbb{Z}_2 \times \mathbb{Z}_2$.

The cluster state's SPT nature is central to its robustness. Specifically, the entanglement structure persists under perturbations or noise that do not break the protecting symmetries. This robustness is particularly relevant in quantum information processing, where imperfections in hardware could otherwise destabilize the state. A defining feature of the non-trivial topology of the cluster state is the presence of symmetry-protected edge states. When a one-dimensional (1D) cluster chain is truncated, the edges of the chain host qubits that exhibit degenerate eigenstates are protected by the system’s symmetries~\cite{Azses2020,else2012symmetry}. These edge states serve as a clear indicator of the underlying topological order and play a critical role in the cluster state’s suitability for measurement-based quantum computation.

\subsection{Measurement-based wire protocol}
It is known that the cluster state as a prototypical SPT state can serve as a computational resource for measurement-based quantum computing (MBQC). 
While the cluster state can only achieve universal MBQC in two dimensions, the cluster state can synthesize arbitrary $\rm{SU}(2)$ operations. This includes the identity gate: An input state $\ket{\psi_{\rm in}}$ is entangled with a $n$-qubit $1$D cluster state chain by a control-Z gate and by measuring the input and first $(n-1)$ qubits of the cluster state in the $X$ basis, we get the last output qubit in the state $\ket{\psi_{\rm out}}$. This process is illustrated in the main text {Fig.~4A}.
A final correction unitary $U_\Sigma$ is applied so that
\begin{equation}
    \ket{\psi_{\rm in}}=U_\Sigma\ket{\psi_{\rm out}}.
\end{equation}
This emphasizes that the identity gate is not a null operation but a quantum teleportation of the input state. Similar to the textbook quantum teleportation protocol, the correction unitary $U_\Sigma$ of the identity gate depends on the measurement results of the first $n$ qubits. 

It is known that the family of states possessing the same SPT order can all serve as resource states for MBQC. As such, as long as the $\mathbb{Z}_2 \times \mathbb{Z}_2$ symmetry is preserved, one can implement basic MBQC operations leveraging the state, such as the identity gate. This provides us with a method for detecting the SPT order by whether the quantum teleportation is faithfully implemented, independent of the input state. As such, the scheme following Ref.~\cite{Azses2020} is to introduce two sets of unitaries acting on the cluster state:
\begin{equation}
\begin{split}
    U_{\rm{S}}(\alpha, \beta)&=e^{i \beta Z_1 X_2 Z_3} e^{i \alpha X_3},\\
    U_{\rm{SB}}(\alpha, \beta)&=e^{i \beta Z_1 X_2 Z_3} e^{i \alpha Y_3}.
\end{split}
\end{equation}
One can readily check that $U_{\rm{S}}(\alpha, \beta)$ and $U_{\rm{SB}}(\alpha, \beta)$ preserves and breaks the $\mathbb{Z}_2 \times \mathbb{Z}_2$ symmetry. We also introduce $k=2,3,4,5,6$ parallel implementations of the unitaries as shown in Fig.~\ref{fig:spt-2group}. For $k=2$, the set of symmetry-preserving and breaking unitaries is
\begin{equation}
\begin{split}
    U_{\rm{S}}(\alpha, \beta)&=e^{i \beta Z_1 X_2 Z_3} e^{i \alpha X_3}e^{i \beta Z_5 X_6 Z_7} e^{i \alpha X_7},\\
    U_{\rm{SB}}(\alpha, \beta)&=e^{i \beta Z_1 X_2 Z_3} e^{i \alpha Y_3}e^{i \beta Z_5 X_6 Z_7} e^{i \alpha Y_7}.
\end{split}
\end{equation}
For larger $k$, the setting is similar, such that a one-qubit interval is set between each unitary. The presence of the topological order of the resource states $U_{\rm{S}}\ket{\rm{CL}}$ or $U_{\rm{SB}}\ket{\rm{CL}}$ are then determined by the fidelity of the teleported states. To compute the fidelity, one first applies the correction unitary $U_\Sigma$ and estimates the fidelity with the input state:
\begin{equation}
    \mathcal{F}(\ket{\psi_{\rm{in}}}\bra{\psi_{\rm{in}}}, \rho_{\rm{out}}^\prime)=\langle\psi_{\mathrm{in}}|\rho_{\rm{out}}^\prime|\psi_{\mathrm{in}}\rangle,
\end{equation}
where $\rho_{\rm{out}}^\prime=U_\Sigma\rho_{\rm{out}}U^{\dagger}_\Sigma$. This density matrix can be estimated through single-qubit tomography, i.e.~measuring in the Pauli basis:
$\rho_{\text {out}}^\prime=\left(1+\left\langle X_{\text {out}}\right\rangle \sigma^x+\left\langle Y_{\text {out}}\right\rangle \sigma^y+\left\langle Z_{\text {out}}\right\rangle \sigma^z\right) / 2$. Yet, we find the computational process unamenable to our quantum device as feedforward quantum operations are unavailable. Therefore, a possible solution may be to postselect on the measurement results such that the set of ideal $\rho_{\rm{out}}$ is identical, i.e.~have the same corresponding correction unitary. In the next section, we describe an even better scheme that exempts us from postselection.

\subsection{Postselection-free fidelity estimation}\label{sec:post-free}
In principle, feedforward quantum operations in MBQC protocols are necessary to successfully implement specific quantum gates. Yet, when the input state is a stabilizer state, we can circumvent this problem. This is in parallel with the fact that the MBQC protocol can implement Clifford circuits in a single round of measurement without feedforward operations. 

Consider a single qubit stabilizer state $\ket{\psi_{\rm in}}=C_{\rm in}\ket{0}$ with stabilizer $P_{\rm in}$, where $C_{\rm in}$ is a Clifford operation. Our protocol works by the following.
\begin{enumerate}
    \item Entangle the input state with the $n$-qubit cluster state with a control-Z gate.
    \item Measure the first $n$ qubits in the Pauli $X$ basis. For $n\in \mathbb{Z}_{\rm even }$, measure for the last qubit in the $\{C_{\rm in}\ket{0},~C_{\rm in}\ket{1}\}$ basis by acting the $C_{\rm in}$ gate before the standard basis measurement. Otherwise, measure for the last qubit in the $\{HC_{\rm in}\ket{0},~HC_{\rm in}\ket{1}\}$ basis, where $H$ is the Hadamard gate.
    \item Denote the measurement outcome for the first $n$ qubits as $\mathbf{s}$ and the measurement outcome of the last qubit as $\ket{i}$. The one-shot fidelity is computed as
    \begin{equation}
        \mathcal{F}(\rho_{\rm in},\rho_{\rm out})= (-1)^a \bra{i} C_{\rm in}P_{\rm in}C_{\rm in}^\dagger \ket{i}
    \end{equation}
    where $a$ is determined by the commuatation relationship between $P_{\rm in}$ and $U_\Sigma$ such that $P_{\rm in}U_\Sigma=(-1)^a U_\Sigma P_{\rm in}$. The correction unitary is given by $$U_\Sigma=Z^{s_{\rm even}+s_0}X^{s_{\rm odd}},$$
    where $s_{\rm even}=\sum_{i=1,i\in \mathbb{Z}_{\rm even}}^{n}\mathbf{s}_i$ and $s_{\rm odd}=\sum_{i=1,i\in \mathbb{Z}_{\rm odd}}^{n}\mathbf{s}_i$.
    \item Repeat Steps (1)-(3) sufficient times and take the empirical mean of the single-shot fidelities to get the final fidelity estimation, i.e., $\bar{\mathcal{F}}(\rho_{\rm in},\rho_{\rm out})=\frac{1}{M}\sum_{i=1}^M \mathcal{F}_i (\rho_{\rm in},\rho_{\rm out})$, where $M$ is the total number of repetitions.
\end{enumerate}
We now show the correctness of the protocol. Our protocol is built upon two insights: (i) For a stabilizer input state, the fidelity is estimated by estimating the expectation value of the stabilizer regarding the output state, and (ii) The correction unitary $U_\Sigma$ (i.e.~a Pauli) can either be commute or anti-commute with the stabilizer $P_{\rm in}$. In light of the two insights, we observe that the output density matrix satisfies
\begin{equation}
    \tr{\rho_{\rm out}P_{\rm in}}= \tr{U_\Sigma^\dagger \rho_{\rm in} U_\Sigma P_{\rm in}}=(-1)^a\tr{\rho_{\rm in}P_{\rm in}},
\end{equation}
by noting that $\rho_{\rm in}=U_\Sigma\rho_{\rm out}U_\Sigma^\dagger$. That is we can pull the correction unitary through the stabilizer up to a phase depending on the commutation relationship between the two operators. An effective way of measuring the fidelity of the teleported state thus yields without invoking feedforward quantum operations: We estimate the expectation value of the stabilizer $P_{\rm in}$ and multiply the phase $(-1)^a$, which gives us the estimation of $\tr{\rho_{\rm in}P_{\rm in}}$. Finally, we note that for $n\in\mathbb{Z}_{\rm even}$ or $\mathbb{Z}_{\rm odd}$, we measure the last qubit in the different basis, which we will explain in {Sec.~\ref{sec:mbqc_basics}}.

\begin{figure}[htb]
    \centering
    \includegraphics[width=\linewidth]{ 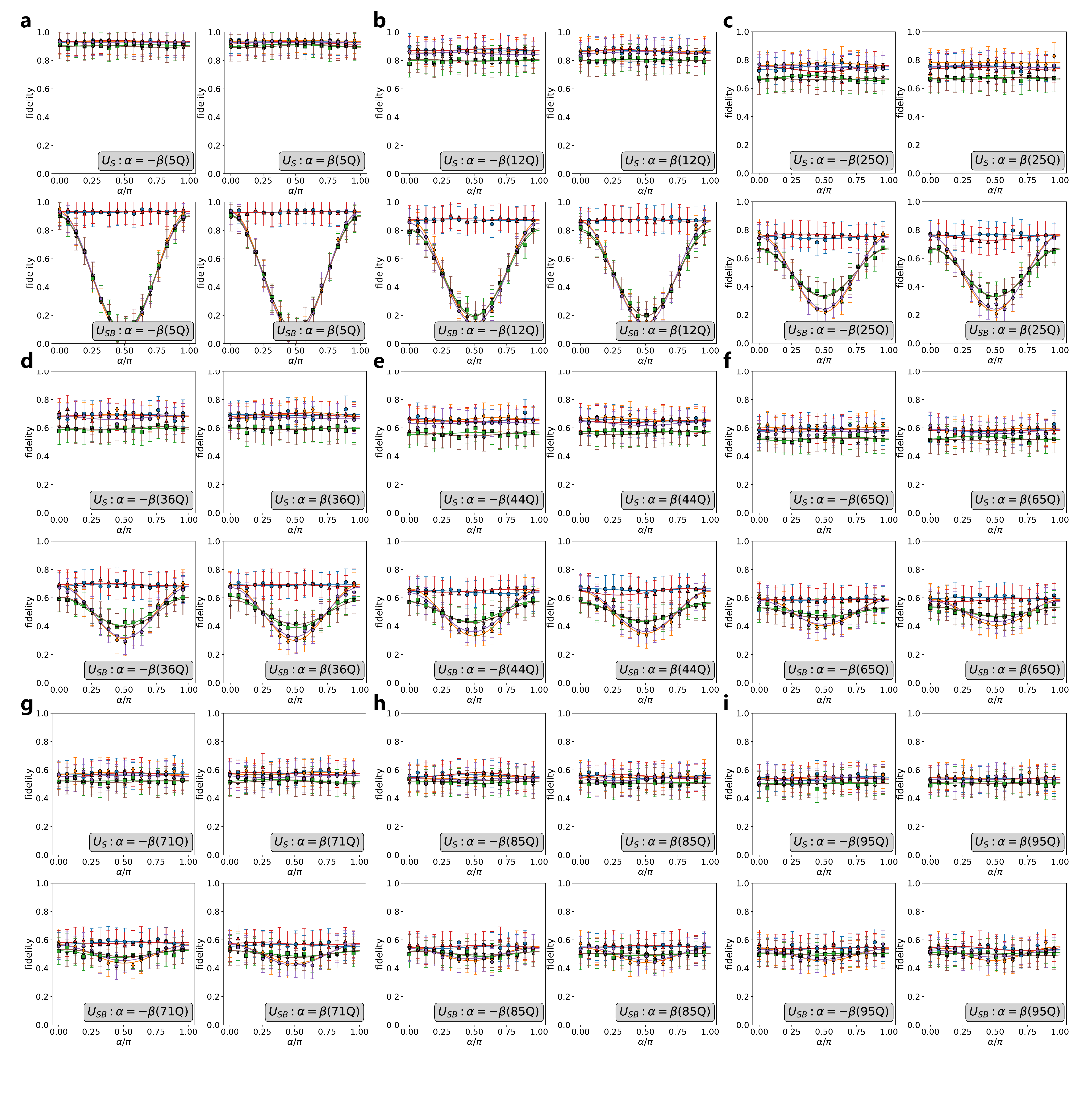}
    \caption{Experimental teleportation fidelity versus the phase parameter  $\alpha/\pi$  for different input states with 1 group of symmetry-preserving or odd-parity symmetry-breaking operation. The curves in the figure are derived from trigonometric fits to the data points, with the twice the amplitude of the fit defined as the fidelity oscillation. (a-i) contains different qubit numbers from 5Q to 95Q. }
    \label{fig:spt-1group}
\end{figure}





\begin{figure}[htb]
    \centering
    \includegraphics[width=\linewidth]{ 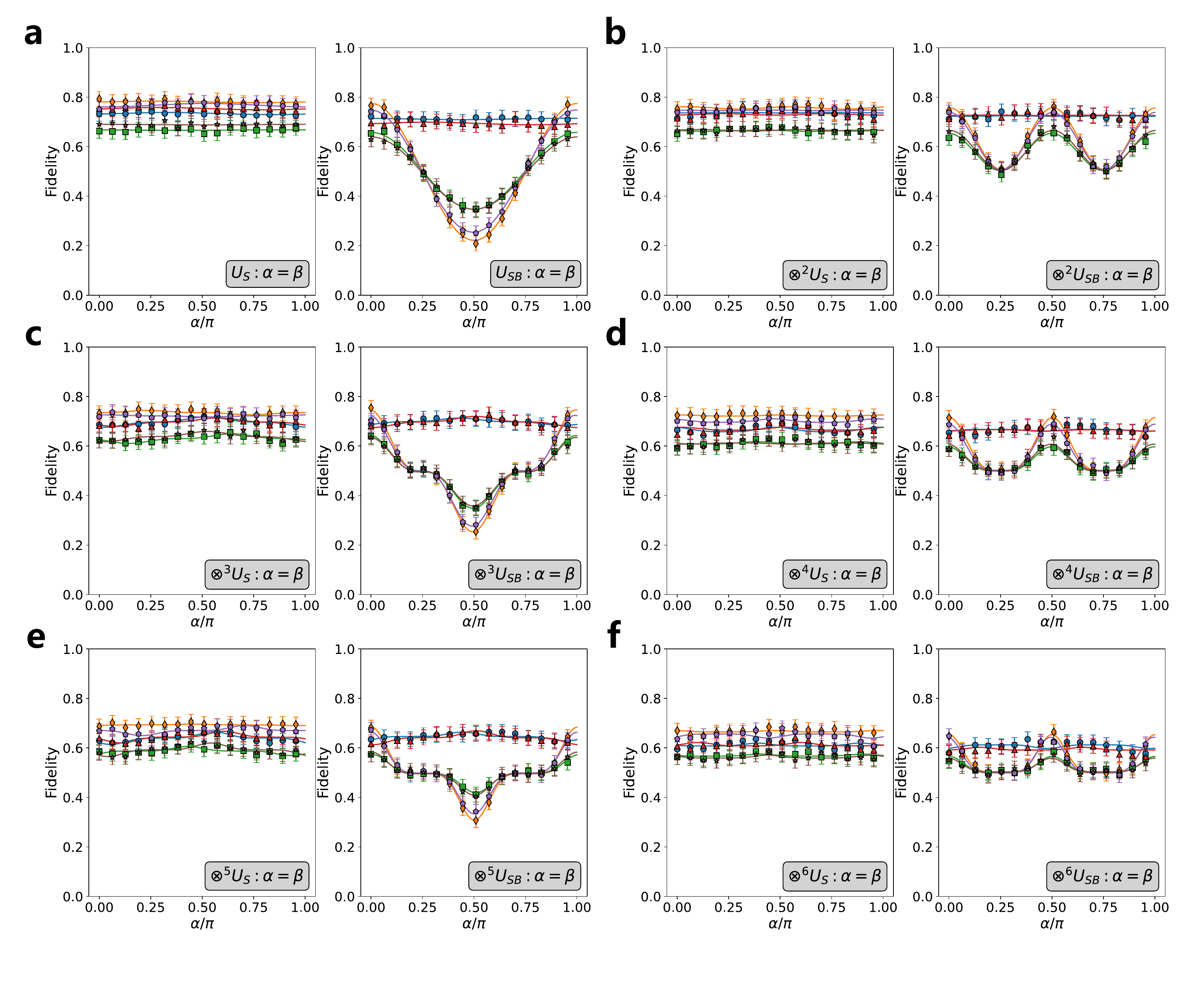}
    \caption{Experimental teleportation fidelity as a function of the phase parameter \( \alpha/\pi \) for different input states under odd-parity symmetry-breaking operations applied across multiple groups of qubits. The simulations were conducted on a teleportation circuit with 25 qubits. The curves represent fits based on trigonometric functions, with the degree of the fit matching the number of parallel groups. For instance, the data for \(n\) groups is fitted using an \(n\)-degree trigonometric function. The fidelity oscillation, defined as twice the amplitude of the fit, quantifies the oscillatory behavior. (a–f) correspond to results for one to six groups of symmetry operations applied to different qubits, respectively. }
    \label{fig:spt-2group}
\end{figure}

As the quantum device is prone to readout errors, we now discuss the combination of measurement error mitigation with our post-selection-free quantum teleportation protocol. The readout errors could cause a severe deviation from the estimated fidelity of the teleported state compared to what it first appears. The rationale is rooted in the fact that the correction unitary depends on the measurement outcome of the first $n$ qubits. As such, the incorrect measurement results resulting from the readout error are counted by an amount of $n*p_{\rm avg}$, where $p_{\rm avg}$ is the average single-qubit error rate if the correlated error rate is small. Consequently, because the correction unitary depends on the measurement outcome's parity, the output state's fidelity may deteriorate linearly as the qubit number grows. This observation is verified in our experimental results in the main text Fig.~4B.

\begin{figure}[htb]
    \centering
    \includegraphics[width=\linewidth]{ 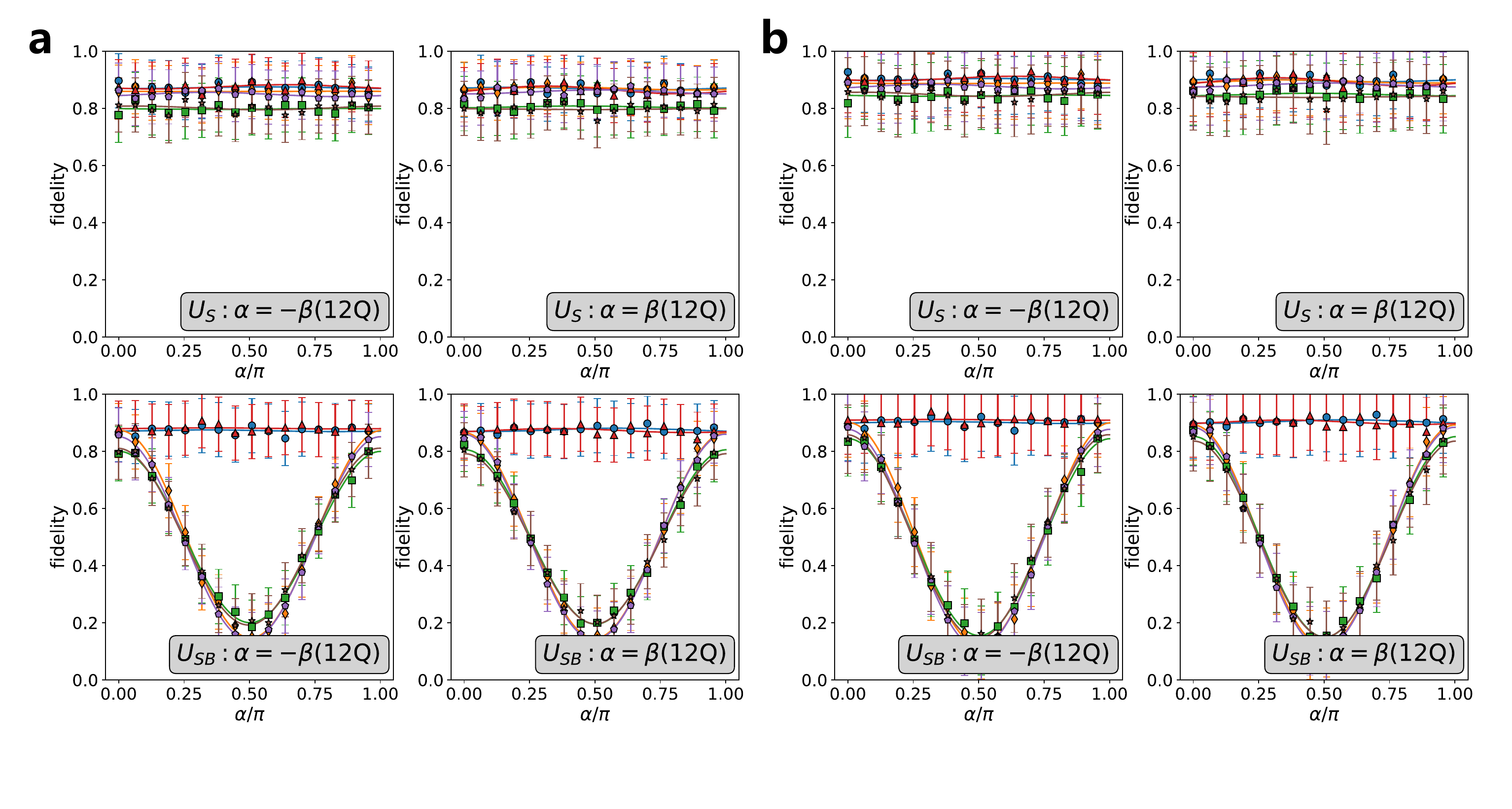}
    \caption{Comparison between SPT results with and without readout error mitigation. The results of 12 qubits are presented here for demonstration. The curves in the figure are derived from trigonometric fits to the data points, with the amplitude of the fit defined as the fidelity fluctuation. (a) Without mitigation. (b) With mitigation.}
    \label{fig:spt-mitigation-compare}
\end{figure}

We provide details on measurement error mitigation for the TOD tasks here. 
In the ideal case, the fidelity between the input state $\rho_{\rm in}$ and the output state $\rho_{\rm out}$ can be expressed as:
\begin{equation}
    \mathcal{F}(\rho_{\rm in}, \rho_{\rm out}) = \sum_x p(x)\operatorname{Tr}\left[U(x)\rho_{\rm out} U(x)^\dagger \rho_{\rm in}\right]
\end{equation}
where  $p(x$)  is the probability distribution over the measurement outcomes. 
As mentioned earlier, noise can flip the measurement outcomes according to some probability distribution. Let us denote this noise by  $p(y | x)$, where  $x = x_1 x_2 \dots x_n$  represents the input measurement results and $y = y_1 y_2 \dots y_n$ represents the noisy output.
For a qubit-independent readout noise model, the noise is assumed to act independently on each qubit. Hence, we can write the probability distribution of the noisy output as $p(y) = p(y_1 | x_1) \cdot \dots \cdot p(y_n | x_n) \cdot p(x)$,
which leads to a noisy fidelity expression
\begin{equation}
    \Tilde{\mathcal{F}} = \sum_x p(x) \sum_y p(y_1 | x_1) \cdots p(y_n | x_n) \operatorname{Tr}\left[\rho_{\rm out} \cdot U(y)^\dagger \rho_{\rm in} U(y)\right].
\end{equation}
This can be approximated by performing measurements and averaging over the $M$ measurement outcomes  $y_m$  as follows
\begin{equation}
    \Tilde{\mathcal{F}} \approx \frac{1}{M}\sum_{m=1}^M \operatorname{Tr}\left[\rho_{\rm out} \cdot U(y_m)^\dagger \rho_{\rm in} U(y_m)\right],
\end{equation}
where  $y_m$  is the  $m$-th measurement outcome.

To mitigate the noise effect, we apply the inverse of the noise model. Specifically, the corrected fidelity can be written as:
\begin{equation}
    \mathcal{F} = \sum_x p(x) \sum_y p(y_1 | x_1) \cdots p(y_n | x_n) \sum_{x^{\prime}} p^{-1}(y_1 | x_1^{\prime}) \cdots p^{-1}(y_n | x_n^{\prime}) \times \operatorname{Tr}\left[\rho_{\rm out} \cdot U(x^{\prime})^\dagger \rho_{\rm in} U(x^{\prime})\right],
\end{equation}
where  $p^{-1}(y | x)$  represents the inverse of the noise model. 
In practice, we estimate the fidelity from the measurement outcomes by averaging over $M$ independent trials
\begin{equation}
    \hat{\mathcal{F}} = \frac{1}{M} \sum_{m=1}^M \sum_{x^{\prime}} p^{-1}(y_{m1} | x_1^{\prime}) \cdots p^{-1}(y_{mn} | x_n^{\prime}) \operatorname{Tr}\left[\rho_{\rm out} \cdot U(x^{\prime})^\dagger \rho_{\rm in} U(x^{\prime})\right],
\end{equation}
where  $y_m = y_{m1}y_{m2}\dots y_{mn}$  denotes the  $m$-th measurement outcome, and  $p^{-1}(y_i | x_i^{\prime})$  is the inverse noise function for the  $i$-th qubit.
In this expression, the term inside the trace operator lies within the range $[-1, 1]$, and the inverse noise model amplifies it. 
The amplification factor is characterized by the coefficient
\begin{equation}
    \Gamma = \max_{x,y}\left[p^{-1}(y_1 | x_1) \cdots p^{-1}(y_n | x_n)\right] = \prod_{i=1}^n \frac{1+\left|\epsilon_i-\eta_i\right|}{1-\epsilon_i-\eta_i},
\end{equation}
which corresponds to the tensor product overhead $\Gamma_{\rm TP}$ discussed in Eq.~(\ref{eq:TP_overhead}).
Thus, by similar reasoning, we can estimate the fidelity within an error bound $\epsilon = \Gamma \sqrt{\frac{2\log(2/\delta)}{M}}$, where $M$ is the number of measurements and the failure probability is lower than $\delta$.

\subsection{Fidelity oscillations in the experimental simulation of symmetry-protected topological phases}


Using the circuit in the main text Fig.~4, we performed experimental simulations of the symmetry-protected phase and demonstrated its robustness against symmetry verification operations by measuring the teleportation fidelity. Specifically, we introduced a single symmetry verification operation into the circuit and scanned the variations of the phase parameter \( \alpha \) in the operation to evaluate its impact on the teleportation fidelity. As shown in Fig.~\ref{fig:spt-1group}a, we conducted simulations on a one-dimensional 5-qubit cluster state and fitted the experimental data points using the trigonometric function \( f(x) = A \sin(t/T + \theta) + B \). The fidelity fluctuation was defined as twice the amplitude of the fitted function. Results revealed that, under a single symmetry-preserving operation, the teleportation fidelity exhibited strong robustness to variations in the phase parameter, with fidelity fluctuations below 10\%. In contrast, under a symmetry-breaking operation, the robustness of fidelity to phase parameter variations depended on the input state. For input states \( |\psi_{\text{in}}\rangle = |\pm\rangle \), the symmetry-protected phase retained symmetry and fidelity fluctuations remained below 10\%, whereas for \( |\psi_{\text{in}}\rangle = \pm |0\rangle, \pm |i\rangle \), the symmetry was broken, resulting in fidelity fluctuations exceeding 50\%. 

Using the same protocol, we increased the experimental scale and verified the robustness of symmetry-protected phases in teleportation circuits ranging from 5-qubit to 95-qubit scales. As shown in Fig~\ref{fig:spt-1group}, the experimental results consistently demonstrated robustness, although visibility decreased due to accumulated decoherence and gate errors as the circuit size increased. 

To further validate the robustness of the symmetry-protected phase, we introduced the symmetry verification group with multi parallel symmetry verification operations and scanned the variations of the phase parameter \( \alpha \) for 25-qubit teleportation circuits, following the same approach as the single-operation case, as shown in Fig~\ref{fig:spt-2group}. Similar to the case with a single verification operation, the results further indicated that the fidelity of teleportation exhibits the same state dependence under symmetry-breaking operations. 


By comparing the results of symmetry-preserving and symmetry-breaking operations, we confirmed the impact of these operations on the robustness of the symmetry-protected phase. All experimental data were obtained without error mitigation. Additionally, we compared the fidelity data with and without error mitigation (Fig~\ref{fig:spt-mitigation-compare}). The error-mitigated results showed improved fidelity performance, although the error bars remained noticeable. This indicates that further experimental repetitions are needed to reduce statistical uncertainties.

We highlight some of the exotic behavior of the experimental results and provide explanations here. First, for the case of a single symmetry-breaking unitary introduced, we observe that for $\ket{+}$ and $\ket{-}$ input states, the fidelities are stable and close to the symmetry-preserving cases as shown in Fig.~\ref{fig:spt-1group}. This manifests that for specific input states, the quantum teleportation succeeds regardless of whether the resource state is compromised or not. To explain this, we use knowledge from measurement-based quantum computation (MBQC) that is formally introduced in {Sec.~\ref{sec:mbqc_basics}}. These mathematical tools provide a general framework for identifying what properties we need from the resource state for MBQC to function. This will, in turn, help us explain why, in certain cases, even though the $\mathbb{Z}_2\times \mathbb{Z}_2$ breaks, the quantum teleportation still succeeds; equivalently, the state still could serve as a resource state for the specific occasion.

The quantum teleportation corresponds to the identity gate in MBQC. Taking the input state $\ket{+}$ and the number of qubits $n\in\mathbb{Z}_{\rm even}$ in the resource cluster state as an instance. The general scheme for using MBQC to realize the universal gate set is provided by [Ref.~\cite{raussendorf2003measurement}, Theorem 1]. Specifically, we will apply this theorem and the deduction shown in {Sec.~I G} (1) of Ref.~\cite{raussendorf2003measurement}. In such cases, the overall state (denoted as $\ket{\phi}$) becomes a $(n+1)$-qubit cluster state after Step 1 of the scheme introduced in {Sec.~\ref{sec:post-free}}. Let us note that the $e^{i \beta Z_1 X_2 Z_3}$ in the symmetry-breaking unitary will only add a global phase to the overall state as it is exponential of one of the stabilizers of the cluster state. This leaves the $e^{i \alpha Y_3}$ the term responsible for breaking the $\mathbb{Z}_2\times \mathbb{Z}_2$ symmetry. Following {Sec.~I G} (1) of Ref.~\cite{raussendorf2003measurement}, we get the resulting eigenvalue equation of the overall cluster state as 
\begin{equation}\label{eq:plus_ev_eq}
    \prod_{i=0,i\in\mathbb{Z}_{\rm even}}^n X_i\ket{\phi}=\ket{\phi}.
\end{equation}
Now, adding the $e^{i \alpha Y_3}$ term gives
\begin{equation}
    \prod_{i=0,i\in\mathbb{Z}_{\rm even}}^n X_i\ket{\phi^\prime}=\ket{\phi^\prime},
\end{equation}
where $\ket{\phi^\prime}=e^{i \alpha Y_3}\ket{\phi}$. Clearly, the state $\ket{\phi^\prime}$ has the exact same $\prod_{i=0,i\in\mathbb{Z}_{\rm even}}^n X_i$ correlation as a typical cluster state, and the rest of the deduction is similar to that of Ref.~\cite{raussendorf2003measurement}. This explains the results. For other input states, such as $\ket{0}$ or $\ket{1}$, the eigenvalue equation becomes
\begin{equation}\label{eq:zero_ev_eq}
    Z_0\prod_{i=1,i\in\mathbb{Z}_{\rm odd}}^{n-1} X_iZ_n\ket{\phi}=\ket{\phi},
\end{equation}
which will be affected by the symmetry-breaking unitary. Yet, when the term in symmetry-breaking unitary becomes $e^{i \alpha Y_2}$, the input state $\ket{0}$ or $\ket{1}$ is unaffected. In contrast, the $\ket{i}$ and $\ket{-i}$ states depend on both preservation of the even and odd parities so that they could be more sensitive to noise.


\subsection{Cluster states as resources for measurement-based quantum computation}
\label{sec:mbqc_basics}

\begin{figure}[htb]
    \centering
    \includegraphics[width=0.8\linewidth]{ 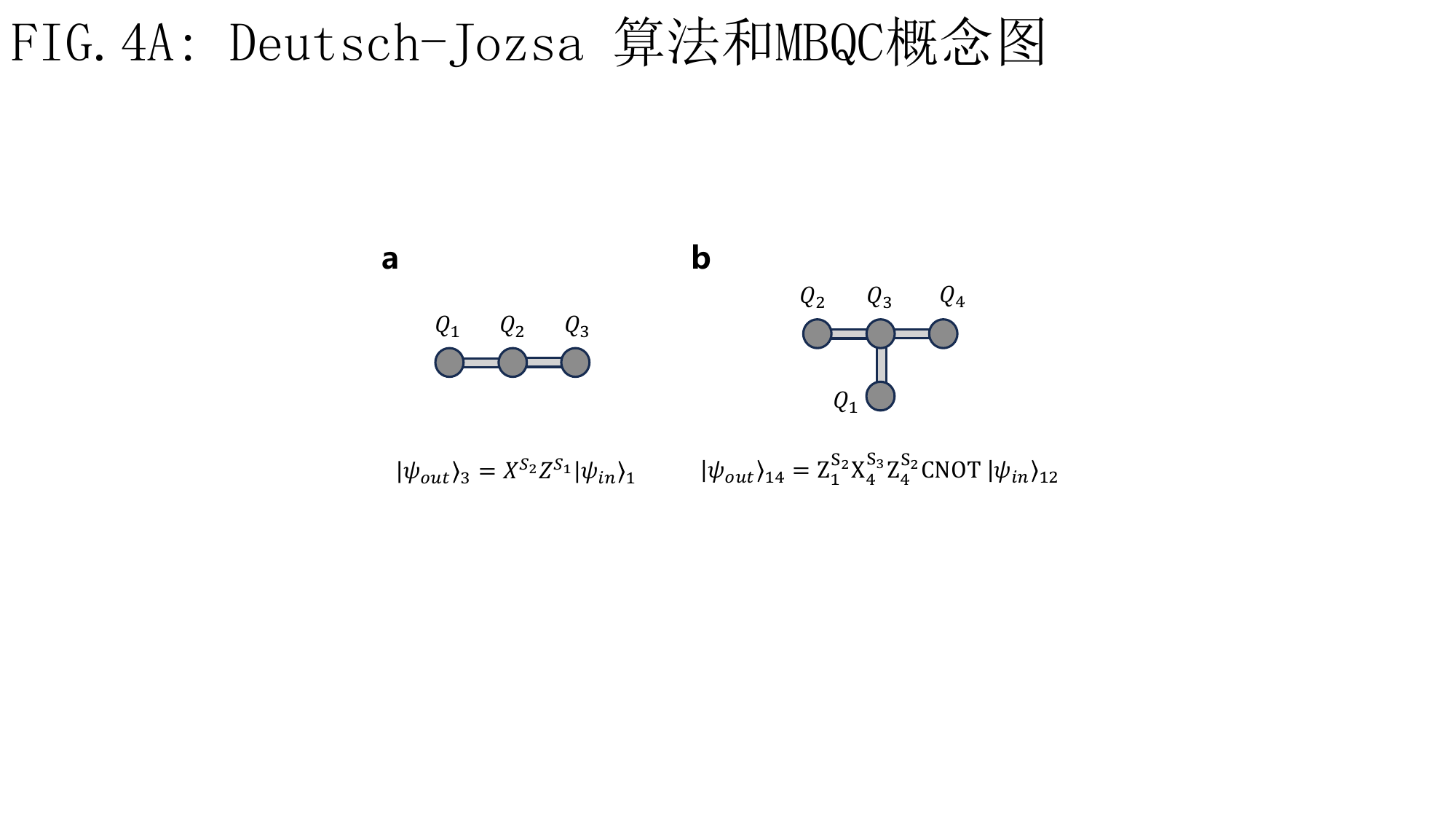}
    \caption{Schematic illustration of MBQC implementation for the identity (I) gate and the controlled-NOT (CNOT) gate. (a) Implementation of the single-qubit I gate using a three-qubit entanglement structure following a cluster-state generation procedure. The input state \(|\psi_{\text{in}}\rangle\) is initialized on qubit 1, while qubits 2 and 3 are prepared in the \(|+\rangle\) state. CZ gates establish a one-dimensional chain, after which X-basis measurements on qubits 1 and 2 yield outcomes \(s_1\) and \(s_2\), determining the final state of qubit 3 as \(|\psi_{\text{out}}\rangle_3 = X^{s_2}Z^{s_1} |\psi_{\text{in}}\rangle_1\). The byproduct operators \(X^{s_2}Z^{s_1}\) can be corrected via classical post-processing. (b) Implementation of the two-qubit CNOT gate using a four-qubit entanglement structure based on a cluster-state preparation protocol. The input state \(|\psi_{\text{in}}\rangle\) is initialized on qubits 1 and 2, while qubits 3 and 4 are in \(|+\rangle\). Three CZ gates are applied according to the illustrated connectivity. X-basis measurements on qubits 2 and 3 yield outcomes \(s_2\) and \(s_3\), resulting in the final state \(|\psi_{\text{out}}\rangle_{14} = Z_1^{s_2} X_4^{s_3} Z_4^{s_2} \cdot \text{CNOT} |\psi_{\text{in}}\rangle_{12}\), with byproduct operators corrected through post-processing. }
    \label{fig:MBQC_gate}
\end{figure}

The measurement-based quantum computation (MBQC) provides an alternative way to implement the quantum circuit on an input quantum state through only single-qubit measurements at the cost that encoding the logical qubits into a larger physical qubit system. For a review of MBQC, see Ref.~\cite{Briegel2009}.
In this section, we review some basic notions and principles of MBQC and shed light on our experimental preparation of the cluster state, as one of the SPT phases provides the potential for realizing large-scale MBQC.

The MBQC implements the quantum circuits by single-qubit measurements. It is well known that two-dimensional cluster states can serve as resources for universal quantum computing for MBQC such that arbitrary $\rm{SU}(2)$ gates and two-qubit primitive gates, such as the CNOT gate, are realizable. While for the one-dimensional case, we can only implement single-qubit gates through measurement. For a $2$D cluster state supported on a constant-degree graph, i.e.~graph state, we will refer to the graph as a $2$D pattern. The graph is partitioned into several parts: (i) the input part, (ii) the body part, and (iii) the output part. The input part prepares the qubit in the state $\ket{\psi_{\rm in}}$. Then, by single-qubit measurements on a pre-determined basis, the state is propagated through the body part. The resulting state is obtained in the output part such that $\ket{\psi_{\rm out}}=U\ket{\psi_{\rm in}}$, where $U$ is the targeted unitary. The above process describes the basic format of MBQC. That is, there exists an underlying information flow in the measurement process such that the input becomes the output state. To explain this, we introduce the following results about MBQC.
\begin{fact}\label{fact:su4}
An arbitrary $SU(2)$ gate can be implemented by a $4$-qubit $1$D cluster state chain with feedforward quantum operations. 
\end{fact}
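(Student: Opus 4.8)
\emph{Proof sketch.} The plan is to reduce the claim to the standard Euler-angle construction of single-qubit unitaries in the one-way model. First I would invoke the Euler decomposition: every $U\in SU(2)$ can be written as $U = R_x(\gamma)\,R_z(\beta)\,R_x(\alpha)$ for suitable angles $\alpha,\beta,\gamma$, where $R_a(\phi)=e^{-i\phi a/2}$. The strategy is to realize each of the three Euler angles by one adaptive $X$--$Y$-plane measurement on an interior site of the $4$-qubit chain $1\!-\!2\!-\!3\!-\!4$: load the input on qubit $1$, measure qubits $1,2,3$, and read the output on qubit $4$. Three measurement angles then supply exactly the three Euler parameters, which is why a chain of four qubits (three measured, one output) is the minimal resource.

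The key technical ingredient I would establish is a single-site propagation lemma. Writing the chain state as $\ket{\mathrm{CL}}=\prod_{j}\mathrm{CZ}_{j,j+1}\ket{+}^{\otimes 4}$ and projecting qubit $j$ onto the basis $\{(\ket{0}\pm e^{i\theta}\ket{1})/\sqrt2\}$, I would show by direct evaluation of $\bra{\pm_\theta}_j\,\mathrm{CZ}_{j,j+1}$ acting on $\ket{+}_{j+1}$ that the encoded logical information is teleported one site to the right while the operator $H\,R_z(\theta)$ is applied to it, accompanied by an outcome-dependent Pauli byproduct $X^{s_j}$. Composing three such steps and using the Clifford identity $H R_z(\theta) H = R_x(\theta)$ then collapses the product $H R_z(\theta_3)\,H R_z(\theta_2)\,H R_z(\theta_1)$ into the Euler form $R_x(\cdot)R_z(\cdot)R_x(\cdot)$ times an accumulated Pauli $\Sigma=X^aZ^b$, so that matching $\theta_1,\theta_2,\theta_3$ to $\alpha,\beta,\gamma$ yields $U$ up to the known correction $U_\Sigma=\Sigma$.

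The main obstacle is the Pauli-frame bookkeeping, which is exactly what forces the feedforward. Because byproduct operators do not commute with the subsequent rotations --- $X R_z(\theta)=R_z(-\theta)X$ and $Z R_x(\theta)=R_x(-\theta)Z$ --- propagating the byproduct $X^{s_j}$ through the next measurement flips the sign of the realized rotation angle in an outcome-dependent way, so fixed measurement angles would implement $R_z(\pm\theta)$ with uncontrolled signs. I would resolve this by tracking the $X/Z$ byproduct through each commutation and prescribing the angle for qubit $j$ to be $(-1)^{c_j}\theta_j$, where $c_j$ is the parity of the relevant earlier outcomes; verifying that this adaptive (feedforward) rule cancels the spurious sign dependence and leaves only the final Pauli $\Sigma$ to be corrected classically is the crux of the argument. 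I expect the careful commutation tracking across all three sites, rather than any single identity, to be the delicate part.
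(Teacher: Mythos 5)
Your route is essentially the paper's: the paper's own proof is a short sketch that cites Browne's Fig.~21.3b and Raussendorf et al.\ for the Euler-angle construction, while you reconstruct that construction explicitly (the per-site propagation lemma giving $HR_z(\theta)$ with an $X^{s_j}$ byproduct, plus the outcome-dependent sign adaptation $(-1)^{c_j}\theta_j$ that constitutes the feedforward). The propagation lemma and the Pauli-frame bookkeeping you describe are exactly the content the paper outsources to its references.

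However, there is a concrete parity error in your composition step. With the input loaded on qubit 1 and qubits $1,2,3$ measured, you accumulate \emph{three} factors $HR_z(\theta_j)$, and an odd number of Hadamards cannot all pair up:
\[
HR_z(\theta_3)\,HR_z(\theta_2)\,HR_z(\theta_1)\;=\;R_x(\theta_3)\,R_z(\theta_2)\,R_x(\theta_1)\,H ,
\]
not $R_x(\theta_3)R_z(\theta_2)R_x(\theta_1)$ as you claim: the identity $HR_z(\theta)H=R_x(\theta)$ consumes Hadamards in pairs, so a residual $H$ always survives a three-measurement chain. (This is the same even/odd-chain effect the paper itself confronts in its MBQC section, where it compensates by Hadamard-transforming the readout basis; the textbook five-qubit construction avoids it by spending one extra, fixed $X$-basis measurement.) The gap is easily repaired and does not affect the conclusion: since $H$ is a fixed involution, choose $(\theta_1,\theta_2,\theta_3)$ to be the $x$-$z$-$x$ Euler angles of $UH$ rather than of $U$; then the chain implements $(UH)H=U$ up to the Pauli byproduct. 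But as written, your claimed collapse to pure Euler form is false, and the careful commutation tracking you rightly identify as the crux would inherit this missing Hadamard at every step.
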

\begin{proof}
An illustration of the realization can be found in [Ref.~\cite{browne2016one}, Figure 21.3b], where the input qubit is on the leftmost of the $1$D chain. After measuring the first $3$ qubits and performing feedforward quantum operations, we obtain the aimed output state on the rightmost qubit.
To realize an arbitrary $\rm{SU}(2)$ gate, it suffices to decompose the gate into products of three gates through Euler angles. This is given by $U_Z(\alpha)U_X(\beta)U_X(\gamma)$, where $U_\sigma(\theta)$ is the single rotation around the $\sigma$ axis of angle $\theta$. For the derivation, we refer the readers to Ref.~\cite{browne2016one,raussendorf2003measurement}. It should be noted that the feedforward operations are determined by the measurement outcome, which is similar to the correction unitary of quantum teleportation (i.e.~identity gate, as shown in Fig.~\ref{fig:MBQC_gate}a).
\end{proof}

\begin{fact}\label{fact:cnot}
The CNOT gate can be implemented by a $2$D pattern illustrated in Fig.~\ref{fig:MBQC_gate}b.
\end{fact}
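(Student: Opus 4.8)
The plan is to verify the claim by direct computation in the stabilizer (Heisenberg) formalism, which is the cleanest route because $\text{CNOT}$ is a Clifford gate and is therefore completely determined by its action on the Pauli group. Identifying qubit $1$ as the control (whose output remains on qubit $1$) and qubit $2$ as the target input (propagating to the output qubit $4$), it suffices to reproduce the four conjugation relations $X_1 \mapsto X_1 X_4$, $Z_1 \mapsto Z_1$, $X_2 \mapsto X_4$, and $Z_2 \mapsto Z_1 Z_4$, up to the stated byproduct, rather than tracking the full $16$-dimensional state vector over an arbitrary input. First I would read off the three edges of the graph from Fig.~\ref{fig:MBQC_gate}b, fix the corresponding $\text{CZ}$ operators, and record the preparation $\ket{\psi_{\rm in}}_{12}\otimes\ket{+}_3\otimes\ket{+}_4$.

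Second, I would track the Heisenberg-picture evolution of the input Pauli operators $\{X_1, Z_1, X_2, Z_2\}$ through the $\text{CZ}$ network. Since the $\text{CZ}$ gates are Clifford, this is a routine conjugation governed by the rule that $\text{CZ}_{(a,b)}$ sends $X_a \mapsto X_a Z_b$ while fixing $Z_a$; the resulting operators on the four-qubit register then follow by inspection of the adjacency fixed in the first step.

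Third, and this is the crux, I would implement the $X$-basis measurements on qubits $2$ and $3$ using the stabilizer measurement-update rule, recording the outcomes $s_2, s_3 \in \{0,1\}$. Each measurement projects onto the $X_j = (-1)^{s_j}$ eigenspace; any propagated operator that anticommutes with the measured $X_j$ must be multiplied by that measured stabilizer to restore commutation, and this is precisely the mechanism that turns measurement randomness into a Pauli byproduct. I would then collect the operators still supported on the output qubits $1$ and $4$ and check that they coincide with $\text{CNOT}$ conjugation up to $Z_1^{s_2} X_4^{s_3} Z_4^{s_2}$, matching the formula in the caption.

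The main obstacle I anticipate is the bookkeeping of the outcome-dependent byproduct: obtaining the precise combination $Z_1^{s_2} X_4^{s_3} Z_4^{s_2}$, rather than a permuted assignment of the exponents or a spurious sign, requires consistent sign conventions for the $\ket{+}$ initialization, the orientation of each $\text{CZ}$ edge, and the $X$-measurement eigenvalue labeling. A useful end-of-proof consistency check, which I would run explicitly, is to specialize to stabilizer inputs such as $\ket{00}$, $\ket{10}$, and a control-side superposition, confirming that the control is copied while the target flips conditioned on the control; this both validates the algebra and dovetails with the postselection-free argument of Sec.~\ref{sec:post-free}. Alternatively, the existence of such a pattern follows from the general MBQC composition results of Ref.~\cite{raussendorf2003measurement, browne2016one} already invoked for Fact~\ref{fact:su4}, so the direct computation serves mainly to pin down the explicit byproduct operators for this particular four-qubit layout.
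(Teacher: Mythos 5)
Your route is valid but genuinely different from the paper's. The paper argues conceptually: it first establishes the one-bit teleportation primitive (entangling with $\ket{+}$ via CZ and measuring in the $X$ basis applies $X^a H$ to the input, up to the Pauli correction), and then observes that the measurements of Q2 and Q3 produce two Hadamards on the target wire sandwiching the CZ that the cluster natively places between the control and the chain, so the net operation is $H\,\mathrm{CZ}\,H=\mathrm{CNOT}$ with Q1 as control, up to Pauli corrections. Your Heisenberg-picture verification instead checks the four conjugation relations directly and, unlike the paper's argument, actually pins down the explicit byproduct $Z_1^{s_2}X_4^{s_3}Z_4^{s_2}$ stated in the figure caption, which the paper's proof leaves implicit in ``up to Pauli corrections.'' That is a genuine advantage of your approach; the paper's is shorter and more modular, since it reuses the same teleportation primitive underlying Fact~\ref{fact:su4} and the identity-gate discussion.

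However, one step as you describe it would fail. You propose to track only $\{X_1,Z_1,X_2,Z_2\}$ through the CZ network and then, at each measurement, to multiply any anticommuting operator ``by that measured stabilizer.'' Multiplying by the measured $X_j$ (or by $(-1)^{s_j}X_j$) does not restore commutation: if $L$ anticommutes with $X_j$, then so does $LX_j$. The correct update multiplies $L$ by a stabilizer of the \emph{pre-measurement} state that anticommutes with $X_j$, and those stabilizers come from the resource qubits, which are absent from your tracked set: you must also propagate $X_3$ and $X_4$ (the stabilizers of the $\ket{+}$ preparations) through the CZs, giving $Z_1Z_2X_3Z_4$ and $Z_3X_4$. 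It is these operators you multiply by (e.g., $Z_2 \mapsto Z_2\cdot Z_1Z_2X_3Z_4 = Z_1X_3Z_4$ when measuring $X_2$, and then $X_1Z_3\mapsto X_1Z_3\cdot Z_3X_4=X_1X_4$ when measuring $X_3$); the outcomes $s_j$ enter separately, when residual support on a measured qubit is replaced by its eigenvalue, e.g.\ $X_2Z_3\equiv(-1)^{s_2}Z_3$. With the tracked set enlarged to include these two resource stabilizers and the two mechanisms kept distinct, your computation goes through and reproduces both the CNOT conjugation relations and the stated byproduct.
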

\begin{proof}
The 2D measurement pattern for the CNOT gate was first proposed by Ref.~\cite{Raussendorf2001}. The entanglement of qubits is initially set up using control-Z (CZ) gates to form a cluster state, with subsequent measurements in specific bases generating the desired logical operations.

Consider a single qubit input state $\ket{\psi_{\rm in}} = \alpha \ket{0} + \beta \ket{1}$ entangled with a $\ket{+}$ state via a CZ gate. After measuring the first qubit in the $X$-basis, the output state is:
\begin{equation}
    \ket{\psi_{\rm out}} = X^a H \ket{\psi_{\rm in}},
\end{equation}
where $a \in \{0, 1\}$ is the measurement outcome, and the transformation is equivalent to applying a Hadamard gate, up to a Pauli $X^a$ correction. The same effect is discussed in Ref.~\cite{nielsen2006cluster,browne2016one}.

For the CNOT gate, we consider a 2D cluster state with a control qubit $C$, a target qubit $T$, and intermediary qubits initialized in $\ket{+}$ states as shown in Fig.~\ref{fig:MBQC_gate}b. The control and target qubits are entangled using CZ gates. Measurement of Q2 and Q3 in the $X$ basis results in a CNOT gate between Q1 and Q4, with Q1 being the control qubit. The reason for this implementation is simple. The measurements of Q2 and Q3 lead to two Hadamard gates (up to Pauli corrections) sandwiching the CZ gate natively prepared by the cluster state. As such, it becomes a CNOT gate.
\end{proof}

It should be noted that the feedforward quantum operation is Pauli. This is not an exception; all Clifford gates executed by the MBQC have the feedforward quantum operations as Pauli operators. Besides, from the derivation of Fact \ref{fact:cnot}, we also know that a single-qubit (Pauli $X$ basis) measurement of $1$D cluster state chain is equivalent to applying a Hadamard gate to the input state up to some correction unitary. We thus clarify the choice of measurement basis in Step 2 in the postselection-free fidelity estimation protocol in Sec.~\ref{sec:post-free}. The rationale is that for a $1$D chain of $n\in \mathbb{Z}_{\rm odd}$, the $(n-1)$ measurement results in an even number of Hadamard gates acting on the input state so that up to some Pauli correction, we can detect the teleported state directly. Yet, for $n\in \mathbb{Z}_{\rm even}$, the teleported state is $H{\ket{\psi_{\rm in}}}$ so that the measurement basis is transformed by a Hadmard gate to account for the effect.

\end{document}